\newtheorem{theorem}{Theorem}
\newtheorem{lemma}{Lemma}
\newtheorem{proposition}{Proposition}
\theoremstyle{definition}
\newtheorem{definition}{Definition}
\renewcommand{\ge}{\geqslant}
\renewcommand{\geq}{\geqslant}
\renewcommand{\le}{\leqslant}
\DeclareMathOperator*{\argmin}{argmin}
\newcommand{\set}[1]{\left\{#1\right\}}
\renewcommand{\vec}{\bm}
\newcommand{\id}{\mathbbm{1}}
\newcommand{\bbN}{\mathbb{N}}
\newcommand{\bbR}{\mathbb{R}}
\newcommand{\e}{\varepsilon}
\newcommand{\V}{\mathcal{V}}
\renewcommand{\vec}[1]{#1}
\newcommand{\vsigma}{\sigma}
\newcommand{\vals}{{\vec{v}}}
\DeclareMathOperator{\dist}{dist}
\DeclareMathOperator{\SW}{sw}
\DeclareMathOperator{\MMS}{MMS}
\newcommand{\E}{\mathbb{E}}
\newcommand{\bigfloor}[1]{\left\lfloor#1\right\rfloor}
\newcommand{\bigceil}[1]{\left\lceil#1\right\rceil}
\newcommand{\I}{{\mathcal{I}}}
\title{Fair and Efficient Resource Allocation\\with Partial Information\thanks{A preliminary version appeared in the proceedings of the 30th International Joint Conference on Artificial Intelligence (IJCAI-21).}}
\author{
Daniel Halpern\\
Harvard University\\
\texttt{dhalpern@g.harvard.edu}
\And
Nisarg Shah\\
University of Toronto\\
\texttt{nisarg@cs.toronto.edu}
}
\date{}
\begin{document}
\sloppy
\maketitle

\begin{abstract}
We study the fundamental problem of allocating indivisible goods to agents with additive preferences. We consider eliciting from each agent only a ranking of her $k$ most preferred goods instead of her full cardinal valuations. We characterize the value of $k$ needed to achieve envy-freeness up to one good and approximate maximin share guarantee, two widely studied fairness notions. We also analyze the multiplicative loss in social welfare incurred due to the lack of full information with and without the fairness requirements. 
\end{abstract}

\section{Introduction}\label{sec:intro}
The theory of fair division studies how goods (or bads) should be fairly divided between individuals (a.k.a.\ agents) with different preferences over them. While the pioneering fair division research in economics, starting with the work of \citet{Stein48}, focused on \emph{divisible} goods which can be split between the agents, a significant body of recent research within computer science has focused on allocation of \emph{indivisible} goods~\cite{BCM16}.

Suppose we wish to partition a set of indivisible goods $M$ among a set of agents $N$. In doing so, we would like to take the agents' preferences into account; thus, the first step is to decide how to represent these subjective opinions over the possible bundle of goods the agent could receive. Some of the early work on fair division uses complete ordinal rankings: agents can have a nearly-arbitrary ordering over all $2^{|M|}$ subsets of the goods. Although of theoretical interest, as the number of goods grows, this domain quickly becomes too expressive and often leads to methods that are computationally infeasible (see, e.g.~\cite{HP02}). On the other end, another common approach is to allow agents to express ordinal preferences over the $|M|$ singleton subsets and extend these to ordinal preferences over all possible bundles (see, e.g.,~\cite{BK05,BEF03,aziz2015fair}). However, this suffers from the opposite problem and can often be too restrictive.  

Recent work has thus focused on a different option, \emph{additive cardinal preferences}. This preference domain is popular as it offers a sweet spot between simplicity and expressiveness. Here, each agent $i$ places a non-negative value $v_i(g)$ on each good $g$ and her value for a bundle of goods $S \subseteq M$ is assumed to be the sum of her values for the individual goods in $S$, i.e., $\sum_{g \in S} v_i(g)$. Theoretically, this valuation class gives way to algorithms achieving strong fairness guarantees~\cite{ABM16,CKMP+19,CGM20,GHSS+18,GT20}. Practically, additive valuations are much simpler to elicit than fully combinatorial valuations, which has led to their adoption by popular fair division tools such as Spliddit and Adjusted Winner.\footnote{\href{http://www.spliddit.org}{www.spliddit.org}, \href{https://pages.nyu.edu/adjustedwinner/}{www.nyu.edu/projects/adjustedwinner/}}

However, expressing additive valuations still requires placing an exact numerical value on each good, which can sometimes be difficult or infeasible. An interesting tradeoff can be achieved by eliciting ordinal preferences from the agents, but viewing them as partial information regarding underlying cardinal preferences. This idea originates from the related field of voting theory, where a growing body of work on the \emph{distortion} framework uses ordinal preferences of voters over candidates as means to pick a candidate approximately maximizing social welfare according to the underlying cardinal preferences~\cite{PR06,BCHL+15,CNPS17,MPSW19,MPW20,Kem20,ABFV20}.

In this paper, we focus on eliciting from each agent a ranking of her $k$ most preferred goods (i.e., a prefix of her preference ranking over the goods). A system designer deliberating on whether to use such partial information over traditional cardinal valuations may immediately be interested in the \emph{price} of the missing information. In line with the aforementioned work, we analyze distortion in the context of fair division, i.e., the worst-case (multiplicative) loss in social welfare --- the sum of the values that agents place on their own bundles --- incurred due to the missing information. 

In addition, we are also interested in achieving qualitative fairness guarantees; it is, after all, \emph{fair} division. Two popular guarantees for allocation of indivisible goods are envy-freeness up to one good (EF1)~\cite{LMMS04,Bud11} and approximate maximin share guarantee (MMS)~\cite{KPW18}, which we define in \Cref{sec:model}. With access to agents' full preference rankings over the goods, it is known that EF1 can be achieved via the round robin algorithm~\cite{LMMS04,CKMP+16}, under which agents take turns picking goods in a cyclic fashion. For MMS, \citet{ABM16} show that, using just ordinal preferences over the goods, it is impossible to guarantee better than a $\nicefrac{1}{H_n}$ approximation of MMS, where $H_n = \Theta(\log n)$ is the $n^{\text{th}}$ harmonic number and $n$ is the number of agents; in contrast, given additive cardinal preferences, even $\nicefrac{3}{4}$-MMS can be achieved~\cite{GHSS+18,GT20}. What is the best MMS approximation that can be achieved given agents' preference rankings over all the goods? More generally, if we are only given agents' preference rankings over their $k$ most preferred goods, for what values of $k$ can we achieve EF1 and approximate MMS? What distortion do we incur if, in addition to the missing cardinal information, we also impose these fairness requirements? We answer these questions in our work.

\subsection{Our Contribution}\label{sec:contrib}
A bit more formally, a deterministic (resp. randomized) ordinal allocation rule takes as input the partial preference rankings (where each agent provides a ranking of her $k$ most preferred goods) and returns an allocation (resp. a distribution over allocations) of the goods to the agents. The distortion of the rule is the ratio of the maximum social welfare of any allocation to the (expected) social welfare of the allocation returned, in the worst case over all problem instances in a family. As is common in the literature on distortion, we assume normalized valuations: the total value each agent places on all goods is normalized to $1$. We are interested in two questions. First, how much information is needed to achieve certain fairness guarantees? Second, what is the best distortion of any deterministic or randomized ordinal allocation rule with or without a fairness constraint?\footnote{For a randomized rule, we require that the fairness constraint be met by \emph{all} allocations in the support of the distribution returned.} 

Our results answer these questions for all values of $k$, but for simplicity, we summarize the results for when complete rankings are given ($k=m$) in \Cref{fig:summary}. Without any fairness constraint, the simple deterministic rule that simply allocates all the goods to a single agent achieves distortion $n$. We show that not even a randomized rule with access to complete rankings can achieve distortion better than $n$. 

Next, we consider two fairness requirements: envy-freeness up to one good (EF1) and approximate maximin share (MMS). EF1 is known to be achievable given complete rankings ($k=m$). We characterize the exact value of $k$ needed to achieve EF1. For MMS, we derive almost tight bounds the best possible approximation as a function of $k$. For the case of complete rankings ($k=m$), our results show that $\nicefrac{1}{(2H_n)}$-MMS is achievable, almost matching the asymptotic upper bound of $\nicefrac{1}{H_n}$ due to \citet{ABM16}. Thus, we establish, for the first time, that the best approximation to MMS given ordinal preference information scales logarithmically in the number of agents. 

We also show that when ordinal allocation rules are required to guarantee EF1 or $\alpha$-MMS for $\alpha > 0$, deterministic rules face $\Omega(n^2)$ distortion while randomized rules face $\Omega(n)$ distortion, and matching upper bounds can be derived (in case of MMS, along with best-known $\alpha$-MMS approximation). 
Our distortion upper bounds are achieved through efficient algorithms. In \Cref{sec:impossible-fairness}, we also show that various other fairness guarantees studied in the literature cannot be achieved given just ordinal preference information, even with complete rankings.

\begin{figure}
  \begin{minipage}[b]{0.60\linewidth}
  	\centering
    \begin{tabular}[b]{|c|c|c|}
		\hline 
		Fairness	 & Det			 & Rand \\ \hline
		None		 & $n$			 & $n$  \\
		EF1			 & $\Theta(n^2)$ & $\Theta(n)$  \\
		$\alpha$-MMS & $\Theta(n^2)$ & $\Theta(n)$  \\
		\hline
	\end{tabular}
  \end{minipage}%
  \begin{minipage}[b]{0.30\linewidth}
    \centering
\begin{tikzpicture}
		\tiny
		\tikzstyle{main_node} = [circle, draw, minimum size=4em,inner sep=.3em, align=center]

		\node[main_node] (1) at (0, 0) {Cardinal};
		\node[main_node] (2) at (1.2, -.9) {Cardinal\\ + $P$};
		\node[main_node] (3) at (0, -1.8) {Ordinal\\ + $P$};
		
		\draw[->] (1)--(2) node[midway, above right] {$\Theta(\sqrt{n})$};
		\draw[->] (1)--(3) node[midway, left] {$\Theta(n^2)$};
		\draw[->] (2)--(3) node[midway, below right] {$\Theta(n^2)$};
		
	\end{tikzpicture}
\end{minipage}
\caption{The table on the left summarizes the optimal distortion for deterministic and randomized rules with access to the complete rankings $(k = m)$. Note that $\alpha$-MMS is achievable for $\alpha=\nicefrac{1}{2H_n}$, but not for $\alpha > \nicefrac{1}{H_n}$. However, the distortion lower bounds hold for any $\alpha > 0$. The diagram on the right shows the worst-case ratio of social welfare between pairs of settings from the following three: cardinal valuations given, cardinal valuations given but property $P$ required, ordinal preferences given but property $P$ required. The diagram holds for both $P \in \set{\text{EF1}, \alpha\text{-MMS}}$ and the top-right arrow, the price of fairness $P$, is due to {\protect\citet{BBS20}}.} 
\label{fig:summary}
\end{figure}
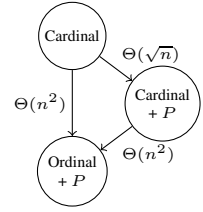

\subsection{Related Work}\label{sec:related}

There has been a substantial amount of work on using ordinal preferences in fair allocation of indivisible goods. For example, \citet{aziz2015fair} consider the question of checking the existence of allocations that possibly or necessarily satisfy certain fairness guarantees such as envy-freeness given only ordinal preferences of the agents over the goods. \citet{bouveret2010fair} study similar questions, but given partial ordinal preferences of the agents over \emph{bundles of goods}. 

Some of the work does not assume any underlying cardinal preferences; instead, it aims to obtain guarantees defined directly in terms of the ordinal preferences. For example, \citet{BaumeisterBLNNR17} and \citet{nguyen2017approximate} use the so-called scoring vectors to convert agents' ordinal preferences into numerical proxies for their utility and then consider maximizing various notions of social welfare or guaranteeing various fairness properties in terms of such utilities. 

Another related line of work uses ordinal allocation rules (such as picking sequence rules) in settings with cardinal valuations. For example, \citet{aziz2016welfare} focus on the complexity of checking what social welfare such rules can possibly or necessarily achieve. \citet{ABM16} seek to use picking sequence rules to obtain approximation of the maximin fair share guarantee; indeed, as mentioned earlier, we settle a question left open in their work. However, their main focus is on ensuring truthfulness, i.e., preventing agents from manipulating their preferences. Manipulations under picking sequence rules have received significant attention~\cite{aziz2017equilibria,aziz2017complexity}. 

\section{Model}\label{sec:model}

For $j \in \bbN$, let $[j] = \set{1, \ldots, j}$. Let $N = [n]$ be a set of \emph{agents} and $M = [m]$ be a set of \emph{goods}. Each agent $i$ is endowed with a \emph{valuation} function  $v_i: 2^M \to \bbR_{\ge 0}$, which is \emph{additive}: $v_i(S) = \sum_{g \in S} v_i(\set{g})$ for all $i \in N, S \subseteq M$; and \emph{unit-sum}: $v_i(M) = 1$ for all $i \in N$. To simplify notation, we write $v_i(g)$ instead of $v_i(\set{g})$ for a good $g \in M$. We refer to $\vals = (v_1, \ldots, v_n)$ as the \emph{valuation profile}. 

For $k \in [m]$, a \emph{top-$k$ ranking} $\sigma_i$ of agent $i$ is a ranking of agent $i$'s $k$ most valuable goods (ties broken arbitrarily). We say that a good is \emph{ranked} by an agent if it appears in their top-$k$ ranking and \emph{unranked} otherwise. We refer to $\vsigma = (\sigma_1, \ldots, \sigma_n)$ as the \emph{top-$k$ preference profile} (or, simply, preference profile). Note that the value of $k$ is the same for all agents. When $k = m$, we refer to these as complete rankings. We say that $v_i$ is consistent with $\sigma_i$, denoted $v_i \triangleright \sigma_i$, if $v_i(g) \ge v_i(g')$ for all $g, g' \in M$ such that either $g \succ_{\sigma_i} g'$ if both $g$ and $g'$ are ranked or $g$ is ranked and $g'$ is unranked. We say that $\vals$ is consistent with $\vsigma$, denoted $\vals \triangleright \vsigma$, if $v_i \triangleright \sigma_i$ for each $i \in N$.

We are interested in taking as input $(N, M, k, \vsigma)$, which we refer to as an \emph{instance}, and finding an allocation of the goods to the agents. For a set of goods $S \subseteq M$ and $\ell \in \bbN$, let $\Pi_\ell(S)$ denote the set of ordered partitions of $S$ into $\ell$ bundles. An allocation $\vec{A} = (A_1, \ldots, A_n) \in \Pi_n(M)$ is a partition of the goods into $n$ bundles, where $A_i$
is the bundle allocated to agent $i$. Under this allocation, the \emph{utility} to agent $i$ is $v_i(A_i)$.
Given a valuation profile $\vals$, the \emph{social welfare} of an allocation $\vec{A}$ is $\SW(\vec{A}, \vals) = \sum_{i \in N} v_i(A_i)$; we simply write $\SW(\vec{A})$ when the valuation profile
$\vals$ is clear from the context.

We will use $\I$ to denote a family of instances. We will use $\I^R$ to denote the family of instances in which relation $R$ over $k$, $n$, and $m$ is satisfied. For example, $\I^{k=m}$ is the family of instances with complete rankings and $\I^{k \ge n - 1}$ is the family of instances with rankings of at least $n-1$ goods.

A (randomized) \emph{ordinal allocation rule} (hereinafter, simply a rule) $f$ for a family of instances $\I$ takes an instance $(N,M,k,\vsigma)$ from $\I$ --- for simplicity, we refer to $\vsigma$ as the sole input to $f$--- and returns a distribution over the set of allocations $\Pi_n(M)$.
We say that $f$ is \emph{deterministic} if it always returns a distribution with singleton support. We will sometimes refer to a distribution over allocations as a \emph{randomized allocation}. The \emph{distortion} of an ordinal allocation rule $f$ with respect to a family of instances $\I$, denoted $\dist^\I(f)$, is the worst-case approximation ratio it
provides to the social welfare over all instances of $\I$:
\[
    \dist^{\I}(f) = \sup_{(N, M, k, \vsigma) \in \I} \sup_{\vals: \vals \triangleright \vsigma} \frac{\max_{\vec{A} \in \Pi_n(M)} \SW(\vec{A}, \vals)}{\E[\SW(f(\vsigma),\vals)]},
\]
where the expectation is over possible randomization in $f$. When $\I$ is clear from the context, we may drop it from the notation. Note that if $\I_1 \subseteq \I_2$, then $\dist^{\I_1}(f) \le \dist^{\I_2}(f)$. Following prior work and to help compare our distortion bounds to the known price of fairness bounds (see \Cref{fig:summary}), we provide distortion bounds parametrized by the number of agents $n$. 
We are interested in the lowest distortion that deterministic and randomized ordinal allocation rules can achieve.

A \emph{fairness property} $P$ maps every instance $I = (N,M,k,\vsigma)$ to a (possibly empty) set of allocations $P(I)$; every allocation in $P(I)$ is said to satisfy $P$ in instance $I$.
Often, fairness properties are defined in terms of agent valuations rather than rankings. In this case, an allocation is said to satisfy $P$ in instance $I$ only if it is satisfied by $P$ for all valuations consistent with $\vsigma$.
We say that a rule $f$ satisfies property $P$ if for all $\vsigma$,
every allocation in the support of $f(\vsigma)$ satisfies $P$.
We are also interested in determining whether ordinal allocation rules can satisfy prominent fairness properties, and when they can, determining the lowest possible distortion they can achieve subject to such properties.

Given an instance $I = (N,M,k,\vsigma)$ with valuations $\vals$, we are interested in the following fairness properties.

\begin{definition}[EF1]\label{def:ef1}
    An allocation $\vec{A}$ is called \emph{envy-free up to one good (EF1)} if for every pair of agents
    $i$, $j$, either $v_i(A_i) \ge v_i(A_j)$ or there exists a good $g \in A_j$ such that $v_i(A_i) \ge v_i(A_j \setminus \set{g})$.
\end{definition}

\begin{definition}[Balancedness]\label{def:balanced}
    An allocation $\vec{A}$ is called \emph{balanced} if $|A_i| - |A_j| \le 1$ for all $i, j \in N$, i.e., if the agents receive approximately an equal number of goods.
\end{definition}

\begin{definition}[MMS]\label{def:mms}
    The \emph{maximin share} of agent $i$ is 
    \[
        \MMS_i = \max_{\vec{A} \in \Pi_n(M)}\min_{A_j \in \vec{A}} v_i(A_j).
    \]
    Given $\alpha \in [0,1]$, an allocation $\vec{A}$ is called \emph{$\alpha$-maximin share fair} ($\alpha$-MMS) if $v_i(A_i) \ge \alpha \cdot \MMS_i$ for all agents $i \in N$. When $\alpha=1$, we simply say that $\vec{A}$ is an MMS allocation.
\end{definition}

\section{Distortion of Ordinal Allocation Rules}\label{sec:no-fair}

We begin by analyzing the lowest distortion that deterministic and randomized ordinal allocation rules can achieve in the absence of any fairness requirement. This precisely captures value of cardinal preference information, or the loss incurred in social welfare due to having only ordinal preferences. 

Even without any preference information ($k=0$), a trivial \emph{deterministic} rule that allocates all the goods to an arbitrary single agent achieves distortion $n$: indeed, the social welfare of such an allocation is $1$, while the maximum social welfare cannot be larger than $n$ since valuations are unit-sum ($v_i(M) = 1$ for all $i \in M$). We show that not even \emph{randomized} ordinal allocation rules with access to complete rankings ($k=m$) can achieve lower distortion.
\begin{theorem}
    \label{thm:all-rules-lower}
    There exists a deterministic ordinal allocation rule with distortion $n$ for the family $\I^{k \ge 0}$. On the other hand, no randomized ordinal allocation rule achieves distortion lower than $n$ even for the restricted family of $\I^{k = m}$. 
\end{theorem}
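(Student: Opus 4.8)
The first claim is immediate and requires no ranking information. The deterministic rule that allocates every good to agent $1$ (ignoring $\vsigma$, so it is valid for every $k \ge 0$) produces the allocation with $\SW = v_1(M) = 1$, whereas $\max_{\vec{A} \in \Pi_n(M)} \SW(\vec{A}, \vals) \le \sum_{i \in N} v_i(M) = n$ by the unit-sum assumption; hence its distortion is at most $n$ on $\I^{k \ge 0}$.

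For the lower bound it suffices to exhibit a \emph{single} instance with complete rankings on which every output distribution is bad against some consistent valuation. I would use a symmetrization (Yao-style) argument that automatically handles the inner $\sup_{\vals}$. Fix a ranking profile $\vsigma$ and build a family of consistent profiles $\vals^{(1)}, \dots, \vals^{(L)}$ (all with $\vals^{(\ell)} \triangleright \vsigma$) so that (i) each has optimal welfare close to $n$, and (ii) their average $\bar\vals = \tfrac1L\sum_\ell \vals^{(\ell)}$ assigns \emph{identical} valuations to all agents. For any distribution $D = f(\vsigma)$, writing $OPT_\ell = \max_{\vec A}\SW(\vec A, \vals^{(\ell)})$ and using linearity of $\SW$ in the valuations together with the mediant inequality gives $\dist(f) \ge \max_\ell \frac{OPT_\ell}{\E_{D}[\SW(\vals^{(\ell)})]} \ge \frac{\frac1L\sum_\ell OPT_\ell}{\E_{D}[\SW(\bar\vals)]}$. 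By (ii), every allocation has welfare $\bar v(M)=1$ under $\bar\vals$ (additivity over a partition), so the denominator is exactly $1$, leaving $\dist(f)\ge \operatorname{avg}_\ell OPT_\ell$.

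The construction realizing (i) and (ii) uses a \emph{common} ranking. Let all $n$ agents share $g_1 \succ \cdots \succ g_m$ with $m = C^n$ for a large parameter $C$, and introduce $n$ nested uniform ``staircase'' shapes, where shape $t$ is uniform of value $C^{-t}$ on the top $C^t$ goods and $0$ elsewhere; each shape is non-increasing along the ranking, hence consistent with $\vsigma$. Index the family by the $n!$ bijections assigning the $n$ shapes to the $n$ agents. For any such profile, allocating the block of goods in positions $(C^{t-1},C^t]$ to the agent holding shape $t$ gives that agent value at least $1-1/C$, so $OPT_\ell \ge n(1-1/C)$ for every $\ell$; and averaging uniformly over all bijections hands each agent the same uniform mixture of all $n$ shapes, so $\bar\vals$ is identical across agents, giving (ii). Combining with the previous paragraph, for every rule $f$ and every $C$ this instance forces $\dist(f) \ge n(1-1/C)$, and letting $C\to\infty$ yields $\dist(f) \ge n$, even on $\I^{k=m}$.

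The step I expect to be the main obstacle is precisely reconciling (i) and (ii). Optimal welfare near $n$ ordinarily forces the agents to concentrate value on essentially \emph{disjoint} good-sets, a structure that a rule could simply read off the rankings and exploit; yet symmetrization needs all agents to share one ranking so that the mean profile becomes uniform and its welfare collapses to $1$. The nested geometric staircases are what break this tension: their nesting lets a single common ranking simultaneously support an allocation of welfare approaching $n$ and a fully symmetric average, at the (necessary-looking) cost of taking $m$ exponential in $n$. Verifying that each staircase profile indeed attains welfare $\to n$ under the block allocation, and that the uniform average over bijections is exactly agent-independent, are the routine but essential calculations.
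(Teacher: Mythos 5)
Your proof is correct and follows essentially the same route as the paper: the identical nested geometric staircase construction over a common ranking, with the family of consistent valuations indexed by bijections of types to agents, and an averaging argument over that family to find one member on which the rule's welfare is (near) $1$ while the optimum is near $n$. Your packaging via the mediant inequality and linearity of welfare in the valuations is a mild streamlining --- the paper instead bounds each agent's expected utility by $X_{i,\ell}+\e$ and averages over a uniformly random bijection, incurring an extra $n\e$ error term that your exact computation of $\SW(\vec{A},\bar{\vals})=1$ avoids --- but the underlying idea is the same.
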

\begin{proof}
	Given the observation prior to the theorem statement, we only need to show that the distortion of an arbitrary randomized ordinal allocation rule $f$, even with access to the complete rankings, is at least $n$. 
	
	Fix an integer $x$, and let $\e = \nicefrac{1}{x}$. Let us consider a preference profile $\vsigma$ over $x^n$ goods, under which every agent has the same preference ranking: lower indexed goods are preferred to higher indexed goods. We construct a family of consistent valuations $\V$, and show that regardless of the randomized allocation returned by $f$ given $\vsigma$, the worst-case loss in social welfare across just this family of valuations already approaches $n$ as $x$ grows large. 
	
	We note that our construction below is similar to the one used by \citet{BLMS19} for lower bounding the social welfare loss incurred by a specific deterministic ordinal allocation rule, but our analysis significantly more intricate as it applies to all randomized ordinal allocation rules. 
	
	For $\ell \in [n]$, we say that an agent is of type $t_\ell$ if they like goods $1$ through $x^\ell$ equally, at value $\nicefrac{1}{x^\ell}$ each, and have value $0$ for the remaining goods. Let $T = \set{t_1, \ldots, t_n}$ be the set
    of all types. The family $\V$ consists of valuation profiles under which there is exactly one agent of each type. Such valuations can be represented by a bijection $\tau: [n] \mapsto T$ mapping agents to types.

    Consider the partition of goods $W = (W_1, \ldots, W_n)$ such that $W_1 = \set{1,\ldots,x}$, and $W_\ell = \set{x^{\ell-1}+1,\ldots,x^\ell}$ for $2 \le \ell \le n$. First, note that under every valuation in $\V$, there exists an allocation with social welfare at least $(1-\e)n$. This is achieved by assigning each $W_\ell$ to the agent of type $t_\ell$, which gives them value at least $(\nicefrac{1}{x^\ell}) \cdot |W_\ell| \ge (\nicefrac{1}{x^\ell}) \cdot (x^\ell - x^{\ell-1}) \ge 1-\nicefrac{1}{x}=1-\e$, resulting in social welfare at least $(1- \e)n$. In contrast, we show that under some valuation in $\V$, the randomized allocation returned by $f$ generates poor social welfare.

    First, note that under any valuation in $\V$, all agents are indifferent between the goods in $W_\ell$, for each $\ell \in [n]$. Hence, we can succinctly describe a randomized allocation by the
    expected fraction of goods from $W_\ell$ assigned to the different agents, for each $\ell \in [n]$. In other words, we can describe a randomized allocation $\mathcal{A}$ by an $n \times n$ matrix $X$ such that $X_{i,\ell}$
    is the expected fraction of goods from $W_\ell$ assigned to agent $i$. 
    Note that for the randomized allocation to be feasible, it must be the case that $\sum_{i=1}^n X_{i,\ell} = 1$ for all $\ell \in [n]$.

    Let $X$ be the matrix corresponding to the randomized allocation $\mathcal{A}$ returned by $f$ on input $\vsigma$ constructed above. Our goal is to find a valuation $\vals \in \V$, characterized by a bijection $\tau: [n] \to T$, under which the expected social welfare of $\mathcal{A}$ is not much more than $1$. First, we claim that if agent $i$ is of type $t_\ell$, then their expected utility under $\mathcal{A}$ is at most $X_{i,\ell} + \e$. To see this, note that agent $i$ is in expectation receiving an $X_{i,\ell}$ fraction of the goods from $W_\ell$, and $W_\ell$ consists of at least a $1-\e$ fraction of the goods they like, as argued above. Hence, the expected utility to agent $i$ under such an allocation is at most $X_{i,\ell}$ from the goods in $W_\ell$ and at most $\e$ from the goods outside of $W_\ell$, which is at most $X_{i,\ell}+\e$ in total. Thus, the expected social welfare of $A$ is at most $\sum_{i,\ell : \tau(i) = t_\ell} X_{i,\ell} + n \e$. Our goal is to show that there exists a bijection $\tau$ for which this quantity is at most $1+n\e$.
    
    Note that if we choose $\tau$ uniformly at random, then the expected value of $\sum_{i,\ell : \tau(i) = t_\ell} X_{i,\ell}$ is equal to $\nicefrac{1}{n} \cdot \sum_{i \in [n], \ell \in [n]} X_{i,\ell} = 1$. Hence, there must exist a bijection $\tau$ for which $\sum_{i,\ell : \tau(i) = t_\ell} X_{i,\ell} \le 1$. Under the corresponding valuation profile, the expected social welfare of $\mathcal{A}$ is at most $1+n\e$. This shows that the distortion of $f$ is at least $\frac{(1-\e)n}{1+n\e}$. As $x$ approaches $\infty$, $\e$ approaches $0$, which establishes that the distortion of $f$ is at least $n$, as desired.
\end{proof}

\subsection{Fairness Lower Bounds}\label{subsec:fair-dist-lower}

In this section, we analyze the lowest distortion that ordinal allocation rules can achieve when they are required to satisfy some fairness constraints. This captures the \emph{combined price} of the lack of cardinal preference information and the imposition of fairness constraints. \Cref{fig:summary} contrasts this with the sole price of the former analyzed in \Cref{sec:no-fair} and the sole price of the latter from known results in the literature. Perhaps not surprisingly, it turns out that the two together lead to a much greater loss in social welfare than each individually. 

Another consequence of our results is that while randomized ordinal  rules are no more powerful than deterministic ones in the absence of any fairness requirements (\Cref{thm:all-rules-lower}), imposing fairness requirements makes their powers diverge. 

Keeping aside the question of distortion, we are also interested in determining which fairness properties ordinal allocation rules can satisfy. A negative answer can be interpreted as a qualitative price of the lack of cardinal preferences. 

We begin by establishing a lower bound on the distortion of deterministic ordinal allocation rules that holds when any fairness property from a broad class is imposed, even with access to complete rankings; later, we argue that the fairness properties of our interest belong to this class. Recall that we require the allocation returned by the rule to satisfy the fairness property, regardless of the unobserved cardinal valuations (consistent with the observed ordinal preferences).
\begin{theorem}
    \label{thm:n2-lower}
    Let $P$ be a fairness property such that when the number of goods equals the number of agents, for every preference profile $\vsigma$, an allocation satisfies $P$ for all valuations consistent with $\vsigma$ if and only if each agent receives a single good. Then, the distortion of every deterministic ordinal allocation rule satisfying $P$ is $\Omega(n^2)$ for the family $\I^{k = m}$. 
\end{theorem}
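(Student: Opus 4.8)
The plan is to construct, for each large $n$, a preference profile on $m = n$ goods together with a family of consistent valuation profiles that simultaneously (i) forces any $P$-satisfying rule to hand each agent exactly one good, and (ii) makes every such single-good allocation have low social welfare, while the optimal welfare is $\Omega(n)$. Since $m=n$, the hypothesis on $P$ tells us that the \emph{only} allocations satisfying $P$ (for all consistent valuations) are the permutation allocations, where agent $i$ gets a single distinct good. A deterministic rule must therefore output one fixed permutation per profile, and our job is to design the instance so that no permutation fares well against a carefully chosen adversarial valuation.

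The key idea is to use the familiar geometric/harmonic gadget (as in \Cref{thm:all-rules-lower}) but adapted so that per-good welfare is the relevant quantity. First I would pick a single common preference ranking $\sigma$ shared by all agents, say goods ordered $1 \succ 2 \succ \cdots \succ n$; because $k=m$, every agent fully ranks all $n$ goods. Then I would define a family $\V$ of valuation profiles indexed by bijections $\tau$ assigning each agent a ``type'': a type-$t_\ell$ agent concentrates almost all her value on the single good she is meant to receive (or on a small prefix), so that the social welfare of a permutation allocation is essentially determined by how well the realized permutation matches the hidden type assignment. The deterministic rule sees only $\sigma$, which is identical across the entire family $\V$, so it must commit to one permutation $\pi$ regardless of $\tau$. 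I would then argue by an averaging argument over the $n!$ bijections: the rule's fixed permutation can be ``well matched'' to the type assignment for only a vanishing fraction of the $\tau$'s, so there exists a $\tau \in \V$ against which the output permutation achieves welfare $O(1/n)$ times the optimum, yielding distortion $\Omega(n)$ from the matching loss and an extra factor of $\Omega(n)$ from the unit-sum concentration, for $\Omega(n^2)$ overall.

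Concretely, the second $\Omega(n)$ factor should come from making each type place value close to $1$ on its intended good and negligible value elsewhere, so that the optimal (correctly-matched) permutation has welfare $\approx n$ whereas a single mismatched good contributes $\approx 0$; the first $\Omega(n)$ factor comes from showing the rule's single committed permutation is forced to mismatch almost all agents. I would formalize ``almost all mismatched'' by building the valuations so that the good that is valuable to a type-$t_\ell$ agent is determined by $\ell$ in a way that conflicts with the shared ranking, ensuring that any fixed permutation $\pi$ can align with the hidden assignment on at most a constant number of agents across the family. The adversary's freedom is precisely the choice of $\tau$ after $\pi$ is fixed, which is what converts the single deterministic output into a worst case.

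The hard part will be reconciling two competing constraints in the gadget: the valuations must be unit-sum and consistent with the \emph{same} ranking $\sigma$ for every profile in $\V$ (otherwise the rule could distinguish them and adapt), yet the ``valuable good'' must differ across types so that no single permutation can satisfy many agents. Reconciling these typically requires a geometric layering like in \Cref{thm:all-rules-lower}: good $j$ held at value roughly $1/x^{\,}$-style weights so that the common ranking is respected while the bulk of a type's mass still lands on a type-specific good. I would need to verify carefully that (a) these valuations are genuinely consistent with $\sigma$, (b) the $P$-characterization applies (so the rule really is confined to permutations), and (c) the averaging bound over bijections gives a mismatch penalty scaling like $n$ rather than a constant. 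Getting the layered weights to yield the full $\Omega(n^2)$ rather than merely $\Omega(n)$ — i.e., squeezing both factors out of one construction — is where the delicacy lies, and I expect the cleanest route is to separate the two effects into orthogonal coordinates of the construction (a within-bundle concentration parameter for one factor, the permutation-matching combinatorics for the other).
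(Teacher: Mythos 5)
Your high-level plan (use the hypothesis on $P$ to restrict the rule to permutation allocations, then exhibit a consistent valuation under which the chosen permutation has welfare $O(\nicefrac{1}{n})$ while the optimum is $\Omega(n)$) matches the shape of the paper's argument, but the gadget you propose cannot realize it. The obstruction is the one you flag yourself and then do not resolve: if all $n$ agents share the single ranking $1 \succ \cdots \succ n$ over $m=n$ goods, consistency forces $v_i(1) \ge v_i(2) \ge \cdots \ge v_i(n)$, so an agent can place mass close to $1$ only on good $1$; the most she can concentrate on a later good $j$ is $\nicefrac{1}{j}$ (by valuing goods $1,\dots,j$ equally). Consequently, even the \emph{optimal} allocation under your type family has welfare at most $\sum_{\ell=1}^n \nicefrac{1}{\ell} = H_n = O(\log n)$, not $\Omega(n)$, and the construction caps out far below an $\Omega(n^2)$ gap. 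The geometric layering of \Cref{thm:all-rules-lower} escapes this only because it uses $m = x^n$ goods, so that block $W_\ell$ carries a $1-\e$ fraction of a type-$t_\ell$ agent's mass even though per-good values decrease along the common ranking; with $m=n$ that room does not exist, and enlarging $m$ destroys the hypothesis on $P$, which characterizes the $P$-allocations only when $m=n$. The averaging over the $n!$ bijections is also unnecessary for a deterministic rule: the adversary may simply inspect the output permutation and choose one bad consistent valuation.

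The paper's proof abandons the common ranking precisely for this reason. It uses heterogeneous rankings: one good $g^*$ ranked first by everyone, and $\nicefrac{n}{2}$ goods each ranked second by a distinct \emph{pair} of agents. A single permutation can give $g^*$ to only one agent and each pair-good to only one member of its pair. The adversary then makes the lone recipient of $g^*$ value all goods at $\nicefrac{1}{n}$ (so the achieved welfare is $\nicefrac{1}{n}$), while every other agent places positive value only on goods she did not receive; the optimum gives $g^*$ to an agent valuing it at $1$ and each pair-good to the deprived member of its pair at value $\nicefrac{1}{2}$, for welfare $\Omega(n)$. The ranking diversity is what lets $\Omega(n)$ agents each extract constant value from distinct goods in the optimum --- exactly what a single shared ranking over $n$ goods forbids. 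Replacing your common ranking with such a pairing structure is the essential missing idea.
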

\begin{proof}
    Fix such a fairness property $P$, a number of agents $n$, and a deterministic ordinal allocation rule $f$ on $\I^{k = m}$ (i.e., only taking complete rankings) satisfying $P$. First, let us suppose that $n$ is even. We construct an instance with $n$ goods, that is, with $m = n$. We split the goods into three different categories and construct a preference profile $\vsigma$ as follows. The first category consists of a single good $g^*$ that is ranked highest by all agents. The next category consists of $\nicefrac{n}{2}$ goods labeled $g_{\set{1,2}},g_{\set{3,4}},\ldots,g_{\set{n-1,n}}$. For each $\ell \in [\nicefrac{n}{2}]$, good $g_{\set{2\ell - 1, 2\ell}}$ is ranked second by both agents $2\ell-1$ and $2\ell$. The final category consists of the remaining $\nicefrac{n}{2}-1$ goods. The construction above identifies the two most preferred goods for all agents; their preference rankings from the third rank onward can be arbitrary. 

    Let $A$ be the allocation returned by $f$ given $\vsigma$. By the assumption of the theorem statement, each agent must receive exactly one good in $A$. Without loss of generality, let us assume that agent $1$ receives $g^*$. In addition, for each $\ell \in \set{2,\ldots,\nicefrac{n}{2}}$, at least one of agents $2\ell-1$ and $2\ell$ does not receive good $g_{\set{2\ell - 1, 2\ell}}$; without loss of generality, assume that agent $2\ell-1$ does not receive it. Let us construct a consistent valuation profile as follows:
    \begin{itemize}
        \item Agent $1$ has value $\nicefrac{1}{n}$ for each good.
        \item Agent $2$ has value $1$ for $g^*$ and $0$ for all other goods.
        \item For $\ell \in \set{2,\ldots,\nicefrac{n}{2}}$, agent $2\ell-1$ has value $\nicefrac{1}{2}$ for $g^*$, $\nicefrac{1}{2}$ for $g_{\set{2\ell-1,2\ell}}$, and $0$ for all other goods; and agent $2\ell$ has value $1$ for $g^*$ and $0$ for all other goods.
    \end{itemize}
    Under $\vec{A}$, the only agent receiving positive utility is agent $1$, who receives utility $\nicefrac{1}{n}$. Therefore,
    the social welfare is $\nicefrac{1}{n}$. In contrast, consider the allocation that gives $g^*$ to agent $2$, $g_{\set{2\ell-1,2\ell}}$ to agent $2\ell-1$ for each $\ell \in \set{2,\ldots,\nicefrac{n}{2}}$, and the remaining goods arbitrarily such that each agent receives a single good. It is easy to check that its social welfare is at least $1 + (\nicefrac{n}{2} - 1) \cdot \nicefrac{1}{2} = \nicefrac{n}{4} + \nicefrac{1}{2}$. Therefore, the distortion of $f$ is at least $(\nicefrac{n}{4} + \nicefrac{1}{2}) / (\nicefrac{1}{n}) \in \Omega(n^2)$.

    If $n$ is odd, we can construct the described instance with $n-1$ agents and $n-1$ goods, add a good ranked last by all agents, and add an agent whose preference ranking matches that of one of the other agents. Using similar arguments as above, regardless of the allocation $A$ chosen by $f$, we can construct a consistent valuation profile in which the social welfare of $A$ is     $\nicefrac{1}{n}$, while the optimal social welfare is at least $\nicefrac{(n - 1)}{4} + \nicefrac{1}{2}$, resulting in $\Omega(n^2)$ distortion. 
\end{proof}

Notice that in the proof of \Cref{thm:n2-lower}, we contrast the social welfare achieved by the rule satisfying $P$ against that of an allocation that assigns each agent a single good, thus also satisfying $P$. That is, the $\Omega(n^2)$ lower bound continues to hold even when comparing to the optimal social welfare \emph{subject to $P$}. On the other hand, our matching upper bounds presented later hold even when comparing to the optimal social welfare \emph{without any fairness constraints}.

\section{EF1}\label{sec:ef1} 
	We now turn our attention to EF1. We begin by fully characterizing the values of $k$, in relation to $n$ and $m$, for which we can achieve EF1. The rules we construct are based on \emph{picking sequence rules}. A picking sequence is simply a sequence of agents $p_1,\ldots,p_{\ell}$, where $\ell \le m$. It is a deterministic rule that works as follows: it first gives agent $p_1$ their favorite good, then gives agent $p_2$ their favorite good among the ones remaining, and so on, for $\ell$ steps. If $\ell < m$, we design a way --- different for each case --- for allocating the remaining goods. A well-known picking sequence rule is \emph{round robin}, which has the cyclic picking sequence $1,\ldots,n,1,\ldots,n,\ldots$ repeated for a total of $m$ steps.

	\begin{theorem}\label{thm:ef1-possible}
		With $n$ agents and $m$ goods, it is possible to guarantee EF1 using top-$k$ rankings if and only if
		\[
		k \ge \begin{cases}
		m - n, &\mbox{if } m \bmod n = 0;\\
		m - 2, &\mbox{if } m \bmod n = 1;\\
		m - (m \bmod n), &\mbox{if } m \bmod n > 1.
		\end{cases}
		\]
	\end{theorem}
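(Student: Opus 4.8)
The plan is to prove the two directions separately: a \emph{possibility} direction exhibiting, for each residue class of $m \bmod n$, a rule that uses only top-$k$ rankings with $k$ at the stated threshold and is EF1 for every consistent valuation; and an \emph{impossibility} direction showing that one fewer rank never suffices. Throughout I would write $q = \lfloor m/n \rfloor$ and $r = m \bmod n$, so $m = qn + r$. Since a randomized rule guaranteeing EF1 must place probability one on EF1 allocations, ruling out deterministic rules suffices for impossibility.

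For possibility, all three rules are a truncated round robin with a blind completion. The key observation is that a picking step at position $p$ (the $p$-th good handed out) can be executed correctly from top-$k$ data whenever $p \le k$: only $p-1 < k$ goods have been taken, so each agent still has a ranked good available, and consistency guarantees that her highest-ranked available good is a true favorite among all available goods for \emph{every} valuation consistent with $\sigma$. I would therefore run round robin in the cyclic order $1,\dots,n$ through position $m - r = qn$ when $r > 1$ ($q$ full rounds), through position $m - n = (q-1)n$ when $r = 0$ ($q-1$ full rounds), and through position $m - 2 = qn - 1$ when $r = 1$ (all of round $q$ except agent $n$'s last pick); in each case the largest executed position equals the claimed threshold. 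The remaining goods ($n$ when $r=0$, $r$ when $r>1$, and $2$ when $r=1$) are then distributed \emph{blindly}: one to each agent when $r=0$, one each to an arbitrary set of $r$ agents when $r>1$, and both to agent $n$ when $r=1$.

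The heart of the possibility proof is a single EF1 verification that works uniformly across all three completions and does not depend on the blind choices. I would first record a \emph{leftover-domination} lemma: any good $a$ still unallocated after the round robin phase satisfies $v_i(a) \le v_i(g)$ for every $g$ that agent $i$ picked, since $a$ was available at each of $i$'s picks and $i$ took something weakly better. Then, for an ordered pair $(i,j)$, I would exhibit a value-nonincreasing (for $v_i$) injection from $A_j$ minus one good into $A_i$: match agent $i$'s round-$s$ pick to agent $j$'s round-$(s+1)$ pick (valid because $i$'s round-$s$ pick occupies a strictly earlier position than $j$'s round-$(s+1)$ pick, hence is weakly preferred by $i$), match one of $i$'s picks to each leftover lying in $A_j$ via the domination lemma, and discard $i$'s own leftovers as slack. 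Summing the matched inequalities yields $v_i(A_i) \ge v_i(A_j \setminus \{g\})$ with $g$ taken to be $j$'s first-round pick (or a leftover, in the degenerate rounds), which is exactly EF1. The only bookkeeping is to check that cardinalities line up in each completion, including the unbalanced pairs created by the blind step, which the domination lemma handles.

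For impossibility I would, for each residue case, construct a single preference profile $\sigma$ on which \emph{every} allocation fails EF1 for some consistent valuation. The template is to make a small group of agents share their top goods, forcing the rule to split a commonly desired high-value block, while leaving just enough unranked goods that an adversary can concentrate hidden value to manufacture envy exceeding one good against whatever split is chosen; the $n=2$, $m=4$ instance, where both agents rank only their common favorite and every partition of the four goods is broken by a suitable unit-sum valuation, is the prototype. I expect this direction to be the main obstacle: the bounds are tight to within one rank, and the savings in the $r=0$ and $r=1$ cases (thresholds $m-n$ and $m-2$ rather than the naive $m-r$) mean the constructions must be tuned per residue, and one must argue simultaneously against all allocations rather than against a single rule's output. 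The possibility direction, by contrast, reduces cleanly to the matching argument above once the correct stopping positions are identified.
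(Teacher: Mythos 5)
Your possibility direction is essentially the paper's: the same truncated round robin with the same stopping positions ($m-n$, $m-2$, $m-r$) and the same blind completions, and your leftover-domination/injection argument is the same content as the paper's lemma that permuting the goods of the last round (so that no agent gets more than one) preserves EF1. That half is fine.

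The genuine gap is the impossibility direction, which you explicitly leave as a template and flag as "the main obstacle." The missing idea is a simple counting lemma, not a per-residue family of tuned constructions. Take the single profile in which all agents agree on \emph{which} goods are the top $k$ (ranking $g_1,\dots,g_k$ in some order and leaving the rest unranked), and for an allocation $A$ let $s_i = |A_i \cap \set{g_1,\dots,g_k}|$. If $A$ is EF1 for every consistent valuation, then $s_i \ge |A_j| - 1$ for all $i \ne j$: witness the valuation where agent $i$ assigns value $0$ to her own unranked goods and equal value to everything else, so her utility is proportional to $s_i$ while her value for $A_j$ is proportional to $|A_j|$. This inequality immediately forces balancedness ($|A_i| \ge s_i \ge |A_j|-1$), which pins down the bundle sizes as $q$ and $q+1$ with exactly $r$ bundles of size $q+1$; combining with $k \ge \sum_i s_i$ then gives $k \ge (q-1)n = m-n$ when $r=0$, $k \ge qn-1 = m-2$ when $r=1$ (one agent may have $s_i = q-1$, the rest need $s_i \ge q$), and $k \ge qn = m-r$ when $r>1$ (at least two size-$(q+1)$ bundles force $s_i \ge q$ for all $i$). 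Your proposed mechanism --- "forcing the rule to split a commonly desired high-value block" and manufacturing envy toward that block --- points at a different and weaker adversary; the decisive valuations here make agent $i$ indifferent among all goods except her own unranked ones, and the three thresholds fall out of a single summation rather than three separate constructions. Without this counting step your plan does not reach the stated bounds, in particular the asymmetric $m-2$ threshold in the $r=1$ case.
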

	\begin{proof}
	Fix arbitrary $n$, $m$, and $k$. We begin with the lower bounds, showing that EF1 can only be achieved if $k$ is sufficiently large. All of our constructions have the same preference profile: all agents agree on which goods are in the top $k$, that is, they rank goods $g_1, \ldots, g_k$ in some order and do not rank the remaining $m - k$ goods. For a given allocation $A$, let $s_i = |A_i \cap \set{g_1, \ldots, g_k}|$ be the number of the top-$k$ goods received by agent $i$. Note that $k \ge \sum_{i \in N} s_i$. We use the following lemma.
	\begin{lemma}\label{lem:k-ef1}
		If agents agree on which goods are in the top $k$ and an allocation $A$ is EF1 for all consistent valuations, then $s_i \ge |A_j| - 1$ for all distinct agents $i, j \in N$.
	\end{lemma}
	\begin{proof}
	Consider a consistent valuation profile in which agent $i$ has zero value for the goods $A_i \setminus \set{g_1, \ldots, g_k}$ but equal value for all other goods (including all of the ones in $A_j$). If $s_i < |A_j| - 1$, EF1 would be violated for agent $i$. 
	\end{proof}
	
	Suppose there exists an allocation $A$ guaranteed to be EF1. We show that this implies $k$ is sufficiently large as per the theorem statement. Let $q \in \bbN$ and $r \in [n-1]$ be such that $m = qn + r$. Since $A$ is guaranteed to be EF1 given ordinal preferences, it must be balanced, that is, $|A_j| - |A_i| \le 1$ for all agents $i$ and $j$: using \Cref{lem:k-ef1}, we can see that $|A_i| \ge s_i \ge |A_j|-1$. In our case, this means that $r$ agents have bundles with size $q + 1$ and the remaining $n - r$ have bundles with size $q$. In the following, we use $k \ge \sum_{i \in N} s_i$. 
	\begin{itemize}
		\item Suppose $r = 0$, so $|A_i| = q$ for all agents $i$. By \Cref{lem:k-ef1}, $s_i \ge q - 1$ all agents $i$, so $k \ge (q - 1)n = m - n$.
		\item Suppose $r = 1$. Therefore, one bundle, without loss of generality $A_1$, has size $q + 1$, and all the others have size $q$. We have that $s_1 \ge q - 1$ and $s_i \ge q$ for all $i \ne 1$ by \Cref{lem:k-ef1}. This implies $k \ge qn - 1 = m - 2$.
		\item Suppose $r > 1$. As at least two agents have bundles of size $q + 1$, by \Cref{lem:k-ef1}, $s_i \ge q$ for all agents $i$. This implies $k \ge qn$.
	\end{itemize}	
	
	Next, we prove the upper bounds, showing that if $k$ is sufficiently large, EF1 can be guaranteed. To do this, we make modifications to the aforementioned round robin rule, which (with a picking sequence of length $m$) is known to guarantee EF1. Note that to run this rule, only ordinal information is needed but complete rankings (or at least $k \ge m - 1$) are needed. To work with round robin, we label the goods in the order they are chosen as follows $g_{1, 1}, g_{2, 1}, \ldots, g_{n, 1}, g_{1, 2}, \ldots, g_{\ell, t}$, where good $g_{i, j}$ is the $j^{\text{th}}$ good picked by agent $i$. We refer to $t$, the largest number of goods picked by any agent, as the number of ``rounds'' and goods $g_{i, t}$ for $i \in N$ as goods picked in the last round. Note that not all agents will necessarily pick a good in the last round. We make use of the following lemma.
	\begin{lemma}\label{lem:rr-permute}
		Suppose we run round robin to $m$ steps with access to complete rankings but reassign the goods received by agents in the last round such that no agent receives more than one good. Then the resulting allocation remains EF1.
	\end{lemma}
	\begin{proof}
	The proof is a slight modification of the classic proof that round robin is EF1. First, we will show the classic proof for completeness and then show how to modify it to achieve our desired result. Without loss of generality, we will add dummy goods $g_{\ell + 1, t}, \ldots, g_{n, t}$, which all agents value at zero, so that all agents receive the same number of goods. Note that the addition of dummy goods does not affect whether an allocation is EF1. We have that $A_i = \set{g_{i, j}|1 \le j \le t}$. Fix two agents $i$ and $i'$; we will show that $i$ does not envy $i'$ up to one good. First, note that $v_i(g_{i, j}) \ge v_i(g_{i', j + 1})$ for all $1 \le j \le t - 1$. This is because $g_{i', j + 1}$ was available on $i$'s $j^{\text{th}}$ pick, but $i$ preferred good $g_{i, j}$. Therefore, we have that
	\[
		v_i(A_i)
		\ge v_i(A_i \setminus \set{g_{i, t}})
		= \sum_{j = 1}^{t - 1} v_i(g_{i, j})
		\ge \sum_{j = 1}^{t - 1} v_i(g_{i', j + 1})
		= \sum_{j = 2}^{t} v_i(g_{i', j})
		= v_i(A_j \setminus \set{g_{j, 1}})
	\]

	as needed.
	
	Now for the modification. We have that $g_{i, t}$ may be replaced by any good received in the $t^{\text{th}}$ round and the above inequalities still hold as $g_{i, t}$ is simply ignored from $A_i$. Further, $g_{i', t}$ may be replaced by any good $g$ in the $t^{\text{th}}$ round and $v_i(g_{i, t-1}) \ge v_i(g)$ will still hold. This implies EF1 will still hold even after the reassignment.
\end{proof}

	All the rules we design will run round robin to at most $k$ steps and then assign the remaining goods in a way that can be accomplished without access to the remaining ordinal preferences. We will then argue that the remaining goods were assigned in a way such that they only permuted the last round of goods in a hypothetical allocation given by running round robin to $m$ steps with access to complete rankings. By \Cref{lem:rr-permute}, this implies EF1.
	
	First, suppose $m \bmod n = 0$ and $k \ge m - n$.
	The rule works as follows: it runs round robin for $m - n$ steps and assigns the remaining $n$ goods so that each agent receives exactly one. This is EF1 by \Cref{lem:rr-permute} as the resulting allocation could also have been computed by running round robin using complete rankings and reassigning the goods from the last round in the way that was arbitrarily chosen.
	
	Next, suppose $m \bmod n = 1$ and $k \ge m - 2$.
	The rule works in a very similar way. It runs round robin for $m - 2$ steps and assigns the remaining two goods to the last agent in the order (or if $m = 1$, just assigns the good to an arbitrary agent). This is EF1 by \Cref{lem:rr-permute} as the resulting allocation could have been computed by running round robin and giving the singular good of the final round to the $n^{\text{th}}$ agent in the order.
	
	Finally, suppose $m \bmod n > 2$ and $k \ge m - (m \bmod n)$. 
	As before, we run round robin for $m - (m \bmod n)$ steps and assign the remaining $m \bmod n$ goods such that each agent receives at most one. This is again EF1 by \Cref{lem:rr-permute}.
\end{proof}

Next, we will consider the best possible distortion of these rules. However, when analyzing the distortion of randomized rules, we will need the following observations.
Given a deterministic picking rule, one way to turn this into a randomized rule, known as its uniform variant, is as follows: pick a permutation of the agents $\tau$ uniformly at random, and then run the rule using this agent labeling.
Earlier, we observed that distortion $n$ can be achieved trivially by allocating all the goods to an arbitrary single agent. The following result shows that a more interesting class of rules --- \emph{uniform picking sequence rules} --- also achieves distortion $n$.
This result is a well-known folklore in the literature, but we present a proof for completeness. A proof for the special case of round robin method was given by \citet{FSV20}. 

\begin{proposition}
    \label{prop:uniform-picking-sequences}
    If a rule begins with a picking sequence of at least $cm$ steps for some constant $0 < c \le 1$, then the uniform variant has distortion in $[n, \nicefrac{n}{c}]$. In paticular, it has distortion $\Theta(n)$, and if $c = 1$, it has distortion $n$.
    
\end{proposition}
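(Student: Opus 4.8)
The plan is to prove the two endpoints of $[n,\nicefrac{n}{c}]$ separately: the lower endpoint is inherited almost for free, and essentially all the work lies in the upper endpoint.

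For the lower bound, I would observe that the uniform variant is itself a randomized ordinal allocation rule, and that on the family $\I^{k=m}$ (where agents report complete rankings) any picking sequence can be executed. Hence \Cref{thm:all-rules-lower} applies verbatim and shows its distortion is at least $n$ on $\I^{k=m}$; since $\I^{k=m}\subseteq\I$ for the family $\I$ on which the rule is defined, the monotonicity $\dist^{\I^{k=m}}\le\dist^{\I}$ extends this to $\dist^{\I}\ge n$. This gives the left endpoint, and in particular pins the distortion at exactly $n$ once the matching upper bound for $c=1$ is established.

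For the upper bound I would first use unit-sum valuations to record the crude ceiling $\max_{\vec A}\SW(\vec A,\vals)\le\sum_i v_i(M)=n$. It therefore suffices to prove that the uniform variant guarantees expected social welfare at least $c$, since then $\dist^{\I}\le n/\E[\SW]\le\nicefrac{n}{c}$; for $c=1$ the target specializes to $\E[\SW]\ge 1$, matching distortion $n$. To show $\E[\SW]\ge c$, the plan is to account only for the value collected during the first $\bigceil{cm}$ picks (the subsequent allocation of leftover goods only adds nonnegative value). The one usable structural fact is that an agent who picks takes her favorite still-available good, so her gain dominates her value for a \emph{uniformly random} still-available good; and since the $t$-th good drawn uniformly without replacement is marginally uniform over all of $M$, its expected value to the picker is exactly $\nicefrac{1}{m}$. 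Summed over at least $cm$ picks, this "random-good proxy" contributes exactly $c$, which is the origin of the constant.

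The main obstacle is that this domination by the random proxy is \emph{not} valid pick-by-pick for a fixed agent labeling: greedy selection is myopic, and an adversarial labeling can force an agent to grab a good that another agent needed, so that on that realization greedy underperforms the random-good proxy. Averaging over the uniform relabeling $\tau$ is therefore essential and is the crux of the argument: one must show that, although individual realizations may fall below the $\nicefrac{1}{m}$-per-pick proxy, the symmetrization over $\tau$ exactly restores $\E[\SW]\ge c$ in expectation. The nested-type instances used in \Cref{thm:all-rules-lower} (where the optimum approaches $n$ while $\E[\SW]$ approaches $1$) confirm that for $c=1$ this bound is tight, so the averaging step cannot be lossy.
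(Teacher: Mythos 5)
Your lower-bound paragraph is fine: the uniform variant is a randomized ordinal rule, so \Cref{thm:all-rules-lower} together with monotonicity of distortion over instance families gives the left endpoint, which is exactly how the paper obtains it (its written proof only establishes the upper bound).

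The upper bound, however, has a genuine gap, and the route you sketch does not close it. Your accounting rests on the claim that the $t$-th picked good is ``marginally uniform over $M$,'' so that each pick is worth $\nicefrac{1}{m}$ in expectation to its picker. This is false even after averaging over the relabeling $\tau$: if all agents share the ranking $g_1 \succ \dots \succ g_m$, the good taken at step $t$ is $g_t$ deterministically, for every $\tau$, and the picker's value for it can be $0$. More fundamentally, a per-pick guarantee of the form ``the picker gains at least $\nicefrac{1}{m}$'' cannot hold, because by step $t$ all of the picker's positively-valued goods may already be gone. You correctly sense that something must be repaired, but the repair you defer to (``the symmetrization over $\tau$ restores $\E[\SW] \ge c$'') is left entirely unproven, and that is precisely the content of the proposition. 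The paper's argument avoids your proxy altogether and needs no symmetrization of the remaining-goods set: at the $j$-th pick only $j-1$ goods have been removed, so whoever picks receives a good worth at least her $j$-th favorite, i.e.\ at least $v_i(g_j)$ where $g_1 \succ \dots \succ g_m$ is agent $i$'s \emph{own} ranking --- and this holds in every realization, not just on average. The permutation $\tau$ enters only through the fact that agent $i$ is the picker at position $j$ with probability exactly $\nicefrac{1}{n}$, so by linearity of expectation $\E[v_i(A_i)] \ge \frac{1}{n}\sum_{j=1}^{\ell} v_i(g_j) \ge \frac{1}{n}\cdot\frac{\ell}{m} \ge \frac{c}{n}$, since the top $\ell$ goods in $i$'s order carry at least an $\nicefrac{\ell}{m}$ fraction of her unit total value. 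Summing over agents gives $\E[\SW] \ge c$ and hence distortion at most $\nicefrac{n}{c}$; the guarantee is an aggregate one over the first $\ell$ picks, not a per-pick $\nicefrac{1}{m}$ bound.
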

\begin{proof}
	Consider some rule $f$ that always runs a picking sequence of length $cm$.
    Fix an arbitrary $I = (N, M, \vsigma, k)$ with $m$ goods and underlying valuations $\vals$.
    Let the picking sequence be $p_{1}p_{2}\ldots p_{\ell}$ for $\ell \ge cm$. We show that the expected social welfare under the rule is at least $c$.
    Since the optimal social welfare is upper bounded by $n$, this gives the desired upper bound of $cn$ on distortion.

    To show the expected social welfare is at least $c$, we show that the expected utility to each agent $i$ is at least $\nicefrac{c}{n}$. Suppose agent $i$ has preference ranking $g_1 \succ \dots \succ g_m$. Because utilities are additive, we can decompose the expected utility to agent $i$ as the sum of the expected utilities received by agent $i$ from all $\ell$ picks. 
    
    Note that agent $i$ only receives a nonzero utility from the $\j^{\text{th}}$ pick if $\tau(p_j) = i$, where $\tau$ is the random permutation chosen by the rule. This happens with probability $1/n$, and with this probability, agent $i$ picks a good at least as valuable to her as $g_j$ (since one of $g_1,\ldots,g_j$ must be available during $j^{\text{th}}$ pick), thus receiving utility at least $v_i(g_j)$. Therefore, the expected utility of agent $i$ from the $j^{\text{th}}$ pick is at least $\nicefrac{1}{n} \cdot v_i(g_j)$ for each $j \le \ell$. Using linearity of expectation, we get that the expected utility to agent $i$ is at least $\sum_{j=1}^\ell \nicefrac{1}{n} \cdot v_i(g_j) \ge \nicefrac{1}{n} \cdot c \cdot v_i(M) = \nicefrac{c}{n}$, where the first transition holds because $g_1, \ldots, g_{\ell}$ account for at least a $c$ fraction of their total value.
\end{proof}

	Let $\I^{EF1}$ be the family of instances with $k$, $n$, and $m$ values satisfying the relation specified in \Cref{thm:ef1-possible} (i.e. for which it is possible to achieve EF1). Then, the best possible distortion subject to the requirement of achieving EF1 is as follows. 
	\begin{theorem}\label{thm:ef1-distortion}
		Among deterministic ordinal rules, all EF1 rules have unbounded distortion on $\I^{EF1} \cap \I^{k = 0}$ and the lowest possible distortion of an EF1 rule with respect to $\I^{EF1} \cap \I^{k \ge 1}$ is $\Theta(n^2)$. Among randomized ordinal rules, the lowest possible distortion of an EF1 rule on $\I^{EF1}$ is $\Theta(n)$. 
	\end{theorem}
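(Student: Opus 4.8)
The plan is to treat the three regimes separately, harvesting the earlier lower bounds and building round-robin-based rules for the upper bounds. All three lower bounds come essentially for free. For the randomized $\Omega(n)$ bound on $\I^{EF1}$, note that complete rankings always permit EF1 (the threshold in \Cref{thm:ef1-possible} never exceeds $m$), so $\I^{k=m}\subseteq\I^{EF1}$; since \Cref{thm:all-rules-lower} already forces distortion at least $n$ for \emph{every} randomized rule on $\I^{k=m}$, the same holds a fortiori for EF1 rules on the larger family. For the deterministic $\Omega(n^2)$ bound on $\I^{EF1}\cap\I^{k\ge1}$, I would verify that EF1 meets the hypothesis of \Cref{thm:n2-lower}: when $m=n$, an allocation is EF1 for all consistent valuations iff each agent receives a single good (one good each makes every $v_i(A_j\setminus\set{g})=0$; conversely, if some agent is empty while another holds at least two goods, a strictly positive consistent valuation breaks EF1). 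Since the instance built in \Cref{thm:n2-lower} has $m=n$ and complete rankings, it lies in $\I^{EF1}\cap\I^{k\ge1}$, yielding $\Omega(n^2)$. For the unbounded deterministic bound on $\I^{EF1}\cap\I^{k=0}$, I would take $m=n$ (which lies in this family since the threshold is $m-n=0$): a deterministic rule sees nothing and must output a fixed one-good-per-agent allocation, so the adversary may choose a derangement-based valuation in which each agent values its assigned good at $\e$ but a distinct other good at $\approx1$, driving $\SW$ to $n\e\to0$ while the optimum stays $\approx n$.

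The real work is the two upper bounds. For the deterministic $O(n^2)$ bound I would run a fixed-order round-robin variant as in \Cref{thm:ef1-possible}, with agent $1$ always picking first, and lower-bound the welfare by agent $1$'s utility alone. Two facts about agent $1$: (a) it receives its globally favorite good $g^{(1)}$ in the first round (or, in the degenerate $m=n$ case where no round-robin steps are run, we simply hand it $g^{(1)}$, which $k\ge1$ reveals), so $v_1(A_1)\ge v_1(g^{(1)})\ge v_1(g)$ for every good $g$; and (b) within the round-robin phase agent $1$ envies no one, whence $v_1(A_1)\ge\nicefrac1n\cdot v_1(\text{round-robin goods})$. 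Since there are at most $n$ leftover goods, each worth at most $v_1(g^{(1)})\le v_1(A_1)$, we obtain $1=v_1(M)\le n\,v_1(A_1)+n\,v_1(A_1)=2n\,v_1(A_1)$, so $v_1(A_1)\ge\nicefrac1{2n}$ and $\SW\ge\nicefrac1{2n}$. As the optimum is at most $n$, the distortion is at most $2n^2$.

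For the randomized $O(n)$ bound I would analyze the uniform variant of the same EF1 rule (EF1 is preserved since every realization is EF1). When the picking sequence has length at least $cm$ for a constant $c$ --- in particular whenever $m\ge2n$, since then the round-robin phase covers at least $m-n\ge\nicefrac m2$ goods --- \Cref{prop:uniform-picking-sequences} immediately gives distortion $\Theta(n)$. The remaining, and hardest, case is a short picking sequence, which forces $m=O(n)$ (few goods, bundles of size one or two) together with $k$ too small to run a full round robin; here \Cref{prop:uniform-picking-sequences} is vacuous. I would handle this by distributing the unranked leftover goods through a uniformly random matching, so each such good lands on a marginally uniform agent, and then decompose $\E[\SW]$ into the ranked round-robin contribution (bounded below by the per-pick argument of \Cref{prop:uniform-picking-sequences}) and the uniform leftover contribution $\nicefrac1n\sum_i v_i(\cdot)$, arguing that in expectation each agent collects at least a $\nicefrac1n$ fraction of its total value and hence $\E[\SW]\ge\Omega(1)$.

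The main obstacle is exactly this last step. The set of leftover goods is itself random --- it depends on the drawn permutation and the realized picks --- so the ranked and unranked contributions cannot be treated as independent; I expect to couple them by lower-bounding each agent's expected utility directly across all of its picks together with its single leftover good, rather than summing the two parts separately, while simultaneously invoking \Cref{lem:rr-permute} to certify that the randomized leftover assignment keeps every realized allocation EF1. Getting a clean constant out of this coupling in the small-$m$ regime, uniformly over all $(n,m,k)\in\I^{EF1}$, is where the argument is most delicate.
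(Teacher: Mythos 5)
Your three lower bounds and your deterministic $O(n^2)$ upper bound track the paper's proof: the paper likewise cites \Cref{thm:all-rules-lower} and \Cref{thm:n2-lower} for the $\Omega(n)$ and $\Omega(n^2)$ bounds, uses a balancedness-forcing construction for the $k=0$ case, and bounds welfare below by agent $1$'s utility in a round robin where she picks first (the paper gets $\nicefrac{1}{3n}$ via the fact that the round-robin phase has length at least $\nicefrac{m}{3}$; your $\nicefrac{1}{2n}$ accounting via the at most $n$ leftover goods, each worth at most agent $1$'s first pick, is an equally valid variant).

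The gap is in the randomized $O(n)$ upper bound, where you flag a ``hardest case'' --- few goods, a short picking sequence, $k$ too small to run a full round robin --- and sketch a coupling between the round-robin and leftover contributions that you do not carry out. That case is a phantom. On $\I^{EF1}$ the value of $k$ is by definition large enough to run the entire round-robin phase of the prescribed rule, so ``$k$ too small'' never arises, and the correct split is simply $m \le n$ versus $m > n$. When $m \le n$ there is no round-robin phase to couple with anything: a uniformly random assignment giving each agent at most one good is EF1 in every realization and gives each agent each good with probability $\nicefrac{1}{n}$, hence expected utility exactly $\nicefrac{1}{n}$ and expected welfare $1$, i.e., distortion at most $n$. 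When $m > n$, the round-robin phase of every EF1 rule from \Cref{thm:ef1-possible} has length at least $\nicefrac{m}{3}$ regardless of $k$: if $m \bmod n = 0$ then $m \ge 2n$ and the phase has length $m - n \ge \nicefrac{m}{2}$; if $m \bmod n = 1$ then $m \ge 3$ and the length is $m - 2 \ge \nicefrac{m}{3}$; if $m \bmod n = r \ge 2$ then $m \ge n + r \ge 2r+1$ and the length is $m - r > \nicefrac{m}{2}$. So \Cref{prop:uniform-picking-sequences} applies directly with $c = \nicefrac{1}{3}$, giving distortion at most $3n$ with no accounting for the leftover goods at all. Your plan is salvageable, but the delicate coupling you anticipate is unnecessary once the case analysis is done correctly.
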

	\begin{proof}
	First, let us consider $\I^{EF1} \cap \I^{k = 0}$. We will show it has unbounded distortion. This family includes an instance with two agents, $1$ and $2$, two items $g_1$ and $g_2$, and empty rankings. As shown in the proof of \Cref{thm:ef1-possible}, EF1 implies balanced, so any EF1 rule must give one item to each agent. Without loss of generality, suppose $g_1$ goes to agent $1$ and $g_2$ goes to agent $2$. It's consistent with the rankings that agent $1$ has value one for $g_2$ and zero for $g_1$, while agent $2$ has value one for $g_1$ and zero for $g_2$. This shows the distortion of the rule is unbounded.

	Next, let us consider $\I^{EF1} \cap \I^{k \ge 1}$. By \Cref{thm:n2-lower}, the distortion of any EF1 rule on this family must be $\Omega(n^2)$.
	We will show that $O(n^2)$ distortion is possible. To do this, we will find an EF1 rule that always gives agent $1$ value at least $\nicefrac{1}{3n}$.
	As the total welfare of any allocation is at most $n$, this will imply the rule has the desired distortion.
	We will split the rule into two cases, $m \le n$ and $m > n$.
	If $m \le n$, then any allocation such that each agent receives at most one good is EF1.
	As $k \ge 1$, we know which good is agent $1$'s favorite, so we will choose such an allocation where agent $1$ receives this good.
	The value that an agent has for the most valuable good is at least $\nicefrac{1}{m} \ge \nicefrac{1}{n} > \nicefrac{1}{3n}$ as needed. If $m > n$, we will run the associated EF1 algorithm described in the proof of \Cref{thm:ef1-possible}, and ensure agent $1$ is first in the order of the round robin portion of the rule. We claim that if round robin is run for $\ell$ steps, then agent $1$ receives at least $\frac{\ell}{mn}$ value. Since in all the EF1 rules, whenever $m > n$, round robin is run for at least $\nicefrac{m}{3}$ steps, this shows the desired result. To show the above claim, note that agent $1$'s top $\ell$ goods account for at least $\nicefrac{\ell}{m}$ value. Further, in each round of the round robin step, agent $1$ receives an item at least as valuable to them as the other $n - 1$ items, so receives at least $\nicefrac{1}{n}$ of their value of the top $\ell$ goods.
	
	For randomized rules, note that the $\Omega(n)$ distortion follows from \Cref{thm:all-rules-lower}. For $O(n)$, we will again split it into two cases, $m \le n$ and $m > n$. When $m \le n$, we will pick an assignment of the goods such that each agent receives at most one uniformly at random. This is EF1 as each agent receives at most one good. Further, each agent receives each good with probability $\nicefrac{1}{n}$, so the expected value of an arbitrary agent $i$ is $\sum_{g \in M} \nicefrac{1}{n} \cdot v_i(g) = \nicefrac{1}{n} \cdot \sum_{g \in M} v_i(g) = \nicefrac{1}{n}$. Hence, the expected social welfare is $1$, so distortion is at most $O(n)$.
	
	When $m > n$, we will run the associated EF1 rule from \Cref{thm:ef1-possible} in a uniform way, picking a permutation $\tau$ of the agents uniformly at random, and then running round robin with this agent order for the designated number of steps, finally distributing the remaining goods in the specified way. The key here is that, as $m > n$, we will always be running uniform round robin for at least $\nicefrac{m}{3}$ steps. As a straightforward consequence of \Cref{prop:uniform-picking-sequences}, this randomized rule will have distortion at most $3n \in O(n)$.
\end{proof}
\section{MMS}\label{sec:mms}

We now turn to our most technical results, which are regarding approximate maximin share (MMS) guarantee. Before we consider distortion subject to approximate MMS, we need to know what approximation to MMS is possible to achieve. Given full cardinal information, it is known that exact MMS cannot be achieved~\cite{KPW18}, but $\nicefrac{3}{4}$-MMS can~\cite{GHSS+18,GT20}. 

Given complete preference rankings ($k=m$), \citet{ABM16} show that it is not possible to achieve $\alpha$-MMS for $\alpha > \nicefrac{1}{H_n}$, where $H_n = \Theta(\log n)$ is the $n^{\text{th}}$ Harmonic number. On the opposite end, they only establish a weaker $\Omega(\nicefrac{1}{\sqrt{n}})$ lower bound, leaving open the question of what the best possible MMS approximation is given complete preference rankings. We settle this question by showing that the best possible MMS approximation for $k=m$ is $\Theta(\nicefrac{1}{H_n})$ (specifically, we derive a lower bound of $\nicefrac{1}{2H_n}$). We also extend the lower and upper bounds to the case of $k < m$. 

Our algorithm is similar to the one provided by \citet{ABM16} to achieve $\Omega(\nicefrac{1}{\sqrt{n}})$, but our improvement crucially relies on \Cref{lem:d-ineq}, which requires an intricate proof to achieve the desired bounds. Note that for $m \le n$, MMS can trivially be satisfied by giving each agent at most one good; thus, we focus on $m > n$. 
\begin{theorem}\label{thm:mms-possible}
	When $m > n$, the following hold. 
	\begin{itemize}
		\item If $k < n - 1$, we cannot achieve $\alpha$-MMS for any $\alpha > 0$.
		\item If $k = n - 1$, we can achieve $\frac{1}{\bigfloor{\frac{m - n + 2}{2}}}$-MMS, but no higher.
		\item If $k \ge n$, we can achieve $\alpha$-MMS for $\alpha = \frac{k - n + 1}{m - n + 1} \cdot \frac{1}{2H_n}$, but not for $\alpha > \frac{k}{H_n(m-n) - (m - k)}$.
	\end{itemize}
\end{theorem}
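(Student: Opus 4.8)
The plan is to handle the three regimes of $k$ separately, proving the matching achievability and impossibility directions within each.

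\emph{Impossibility when $k<n-1$.} Here I exploit that $n-k\ge 2$ ranked goods are "missing." I build an instance in which all agents share the \emph{same} top-$k$ ranking of goods $g_1,\dots,g_k$, leaving $m-k\ge 3$ goods unranked. For any allocation the rule outputs, the $k$ ranked goods occupy at most $k$ agents, so at least $n-k\ge 2$ agents receive only unranked goods. A short counting argument shows some such agent $i$ receives at most $m-n$ goods: if every one of the $\ge n-k$ all-unranked agents held $\ge m-n+1$ goods, the total would already exceed $m$, since $(n-k)(m-n)+n>m$ whenever $n-k\ge 2$ and $m>n$. For that agent I then pick the consistent valuation that is uniform over the $\ge n$ goods outside $A_i$ and zero on $A_i$; this is legal because all ranked goods lie outside $A_i$ (and unranked goods may take any value below the ranked ones), and it forces $\MMS_i>0$ while $v_i(A_i)=0$, ruling out $\alpha$-MMS for every $\alpha>0$.

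\emph{The regime $k=n-1$.} Now $n-k=1$, so the counting obstruction above collapses (it required $n-k\ge 2$) and one can no longer force a zero-value agent; instead I expect to pin down the exact constant $\lfloor (m-n+2)/2\rfloor$. For the upper bound I again use identical top-$(n-1)$ rankings and a victim agent receiving no ranked good, but now spread the adversarial value across the $u:=m-n+1$ unranked goods so that whatever size-$b$ bundle the rule assigns the victim, the ratio $v_i(A_i)/\MMS_i$ is driven down to $1/\lfloor(u+1)/2\rfloor$; the constant comes from optimizing the split of value between the (inaccessible) ranked goods and the unranked goods subject to the consistency constraint that every ranked good weakly dominates every unranked one. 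For achievability I would design a matching picking-sequence rule that allocates one of the $n-1$ ranked goods to $n-1$ agents and distributes the unranked goods so that each agent's bundle certifiably carries a $1/\lfloor(u+1)/2\rfloor$ fraction of its maximin share for \emph{all} consistent valuations.

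\emph{Achievability when $k\ge n$ (the main part).} The key reduction is that, since only ordinal information is observed, the rule must hand agent $i$ a bundle $A_i$ with $v_i(A_i)\ge \alpha\,\MMS_i$ \emph{simultaneously} for every valuation consistent with $\sigma_i$. Writing $d^i_j$ for agent $i$'s value of her $j$-th ranked good, we have $v_i(A_i)=\sum_{g\in A_i} d^i_{\mathrm{rank}_i(g)}$, so the guarantee becomes a purely ordinal inequality: if $A_i$ occupies ranks $r_1<r_2<\cdots$ then $\sum_l d^i_{r_l}\ge \alpha\,\MMS_i$ for all non-increasing unit-sum $d^i$. This is precisely what \Cref{lem:d-ineq} is meant to certify: placing the ranks in a spread-out, round-robin-like pattern (roughly every $n$-th position) yields $\alpha=\tfrac{1}{2H_n}$ under complete rankings, the harmonic denominator emerging from summing the per-block bounds $\tfrac1t d^i_{tn}$. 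I would then run a (uniformly permuted) truncated picking sequence over the known top-$k$ positions; because only positions up to $k$ among the relevant positions $n,\dots,m$ can be controlled, a counting/averaging argument converts the complete-ranking guarantee into its $\tfrac{k-n+1}{m-n+1}$ fraction, exactly as in \Cref{prop:uniform-picking-sequences} but now tracked against $\MMS_i$ rather than against total welfare.

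\emph{Impossibility when $k\ge n$.} For the matching upper bound $\tfrac{k}{H_n(m-n)-(m-k)}$ I would generalize the nested construction from the proof of \Cref{thm:all-rules-lower} (and of \citet{ABM16}): identical rankings with $n$ agent types whose value supports grow geometrically, truncated at rank $k$ so that the unranked tail absorbs the remaining mass. Assigning types adversarially after seeing the rule's allocation, the symmetry forces the ratio to the stated bound; one checks that at $k=m$ this recovers the $\Theta(1/H_n)$ ceiling. The decisive obstacle throughout is \Cref{lem:d-ineq}: bounding $\MMS_i$ below by the round-robin position sum with the tight $2H_n$ factor is the intricate step, since $\MMS_i$ depends on the cardinal values in a global, non-monotone way, and the whole achievability argument stands or falls on that inequality.
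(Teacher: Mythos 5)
Your case decomposition and the first bullet ($k<n-1$) match the paper's proof: identical top-$k$ rankings, a counting argument producing an all-unranked agent with $|A_i|\le m-n$, and a consistent valuation supported outside $A_i$. The trouble starts with $k=n-1$. Your upper-bound victim --- ``a victim agent receiving no ranked good'' whose bundle has some size $b$ --- cannot yield the constant $1/\bigfloor{(m-n+2)/2}$: if that agent misses even one unranked good, then at least $n$ goods lie outside her bundle and a consistent valuation uniform on those drives her approximation to $0$ (so any rule with a positive approximation must give her \emph{all} $m-n+1$ unranked goods), and if she holds all of them, every $n$-partition has some bundle contained in $A_i$, so she attains her full MMS. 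The paper first proves this forced structure (\Cref{lem:n-1-mms}) and then extracts the constant from a \emph{different} agent: the one holding only the least-preferred ranked good $g_{n-1}$, against a valuation with $n-2$ heavy singletons and a uniform tail of $m-n+2$ goods that she could split into two bundles of sizes $\bigfloor{\frac{m-n+2}{2}}$ and $\bigceil{\frac{m-n+2}{2}}$. Without that shift of victim, your optimization over ``size-$b$ bundles'' has nothing to optimize.

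For $k\ge n$ you correctly identify \Cref{lem:d-ineq} as the crux, but two of your descriptions would fail if executed literally. First, a pattern in which an agent appears ``roughly every $n$-th position'' is plain round robin and only guarantees the last agent $\Theta(1/n)$ of her MMS: her value is about $(1-S)/n$ while her MMS can be as large as $1-S$. The construction needs \emph{non-uniform} spacing --- the agent whose first pick is at position $\ell$ must reappear by position $\ell+\bigfloor{j\cdot 2H_n(n-\ell+1)}$, which is compared against $\MMS_i\le \frac{1-S}{n-\ell+1}$ (her $\ell-1$ heavy goods occupy at most $\ell-1$ bundles); \Cref{lem:d-ineq} then certifies that all these per-agent deadlines can be met simultaneously by earliest-deadline-first scheduling --- it is a feasibility statement, not the value bound itself. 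Second, the impossibility for $k\ge n$ does not come from the geometric-support types of \Cref{thm:all-rules-lower} with an averaged type assignment; that machinery bounds welfare, not MMS, and does not produce the stated denominator. The paper instead argues that each of $g_1,\dots,g_n$ must go to a distinct agent, gives agent $i$ (holding $g_i$) a valuation with $i-1$ heavy goods $g_1,\dots,g_{i-1}$, zero on her own unranked goods, and a uniform tail, derives $r_i\ge\alpha\bigfloor{\frac{m-i-q_i+1}{n-i+1}}$ for the number $r_i$ of ranked goods she holds, and sums over $i$ using $\sum_i r_i=k$ and $\sum_i q_i=m-k$; the harmonic sum $\sum_i \frac{1}{n-i+1}$ is what yields $H_n(m-n)-(m-k)$.
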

\begin{proof}
	We begin with the most interesting case, a lower bound for $k \ge n$. We borrow and build upon ideas from the proof of the $\Omega(\nicefrac{1}{\sqrt{n}})$ lower bound due to \citet{ABM16}.

    We construct a picking sequence rule achieving the desired MMS approximation. 
    Fix an agent. For now, suppose we are working with complete rankings, with $k = m$. Suppose the first time this agent appears in the picking sequence is at the $\ell^{\text{th}}$ position (we call this the agent's $0^{\text{th}}$ appearance) for some $\ell \le n$, and then, the agent's $j^{\text{th}}$ appearance occurs at or before position $(\ell + \bigfloor{j \cdot 2H_n (n - \ell + 1)})$ in the picking sequence for every $j$ (as long as this quantity does not exceed $m$). Then, we claim that the agent must be guaranteed at least $\nicefrac{1}{(2H_n)}$ fraction of her MMS value. To see this, note that the agent picks a good at least as valuable as her $\ell^{\text{th}}$ most favorite good in her $0^{\text{th}}$ appearance, and then an additional good at least as valuable as her $(\ell + \bigfloor{j \cdot 2H_n (n - \ell + 1)})^{\text{th}}$ favorite good in her $j^{\text{th}}$ appearance for each $j$. Let $S$ denote the total value the agent places on her $\ell-1$ most valuable goods. Then, this picking sequence guarantees the agent utility at least $\frac{1-S}{2H_n(n-\ell+1)}$. On the other hand, note that the MMS value of the agent is at most $\frac{1-S}{n-\ell+1}$; this is because regardless of how the agent partitions the goods into $n$ bundles, ignoring the (at most) $\ell-1$ bundles containing her $\ell-1$ most valuable goods, even the average value across the remaining (at least $n-\ell+1$) bundles is at most $\frac{1-S}{n-\ell+1}$. Hence, it follows that the agent is guaranteed at least a $\nicefrac{1}{2H_n}$ fraction of her MMS. 
    
    Now, instead of assuming we have complete rankings, suppose we just have top-$k$ for some $k \ge n$. In this case, suppose we run the above picking sequence for just $k$ steps. Then, rather than being guaranteed $\frac{1-S}{2H_n(n - \ell + 1)}$, the agent is only guaranteed $\frac{k - \ell + 1}{m - \ell + 1} \cdot \frac{1-S}{2H_n(n-\ell+1)}$. As their MMS value remains the same and $\ell \le n$, this agent is guaranteed $\frac{k - n + 1}{m - n + 1} \cdot \frac{1}{2H_n}$-MMS as needed. The remaining $m - k$ goods can be allocated arbitrarily.
    
    Our picking sequence gives a guarantee of this style to each agent, albeit for different values of $\ell$. In particular, for each agent $i \in [n]$, the picking sequence provides this guarantee with $\ell=i$. The construction is very simple. 

    \begin{enumerate}
        \item For $1 \le i \le n$ and $0 \le j \le \bigfloor{\frac{m - i}{2H_n(n - i + 1)}}$, we create the pair $(i, i + \bigfloor{j \cdot 2H_n(n - i + 1)})$, indicating that agent $i$'s $j^{\text{th}}$ appearance must occur at or before the position indicated in the second component --- we refer to this as the deadline.
        \item We sort the pairs with respect to their second coordinate.
        \item The first coordinates with respect to the above sorting are a prefix of the picking sequence.
        \item If the length of the above sequence is $m$, we are done; otherwise we arbitrarily assign the remaining picks.
        \item If $k < m$, we truncate the sequence to length $k$.
    \end{enumerate}

	The idea of Steps 2--4 is to produce a picking sequence that meets all the deadlines by using earliest-deadline-first scheduling. It is known that if all the deadlines can be met, then this greedy scheduling procedure is guaranteed to return a sequence meeting them. To show that all deadlines are met, we want to show that there are at most $d$ pairs introduced in Step 1 with the second coordinate (deadline) at most $d$, for all $d \le m$. Note that in particular, this implies that there are at most $m$ pairs in total, so Step 3 would not produce a sequence of length more than $m$.

    To prove this, let us first consider $d \le n$.
    Observe that the $1^{\text{st}}$ appearance deadline of any agent is at or after position $n+1$: this is because $i + \bigfloor{2H_n(n - i + 1)} \ge 1+\bigfloor{n-i+1} = n+1$ for all $i \in [n]$. This implies that the only pairs with deadline at most $n$ are the $n$ pairs of the form $(i,i)$ for $i \in [n]$ corresponding to the $0^{\text{th}}$ appearances of all the agents, which immediately implies the desired goal holds for all $d \le n$.
    
    Next, consider $d \ge n + 1$.The number of pairs for agent $i$ with the second coordinate at most $d$ is
    at most $1 + \bigfloor{\frac{d - i}{2H_n (n - i + 1)}}$. Therefore, the number of total pairs with second coordinate at most $d$ is at most
    $\sum_{i = 1}^n 1 + \bigfloor{\frac{d - i}{2H_n (n - i + 1)}} = n + \sum_{i = 1}^n \bigfloor{\frac{d - i}{2H_n (n - i + 1)}}$. Our goal is to show that this value is at most $d$, which is equivalent to the following lemma. 
    \begin{lemma}
        \label{lem:d-ineq}
        For all $n \in \bbN$ and for all $d \ge n + 1$, $\sum_{i = 1}^n \bigfloor{\frac{d - i}{2H_n (n - i + 1)}} \le d - n$.
    \end{lemma}
    \begin{proof}
    There are two cases, either $d \ge 2n$ or $d < 2n$.
    First suppose that $d \ge 2n$.
    Then,
    \begin{align*}
        \sum_{i=1}^n \bigfloor{\frac{d - i}{2H_n (n - i + 1)}}
        &\le \sum_{i=1}^n \frac{d - i}{2H_n (n - i + 1)}\\
        &\le \frac{d}{2H_n}\sum_{i=1}^n \frac{1}{n - i + 1}\\
        &= \frac{d}{2} \le d - n.
    \end{align*}

    Next, assume that $n + 1 \le d < 2n$. Let $r = d - n$.
    Note that $1 \le r < n$.
    The inequality in the lemma is therefore equivalent to (note the cancelling $+1$ and $-1$ in the numerator) $\sum_{i=1}^n \bigfloor{\frac{r + (n - i + 1) - 1}{2H_n (n - i + 1)}} \le r$. Note that summing over $i$ ranging from $1$ through $n$ is equivalent to summing over $n-i+1$ ranging from $1$ through $n$; substituting $i$ to denote $n-i+1$, we need to prove $\sum_{i=1}^n \bigfloor{\frac{r + i - 1}{2H_n \cdot i}} \le r$.

    Call the summand value $f(i) = \bigfloor{\frac{r + i - 1}{2H_n i}}$.
    First, we show that $f(i) \le r$ for all $i \in [n]$. For this, it is sufficient to show $r + i - 1 \le 2H_n \cdot i \cdot r$.
    We have that $r + i - 1 \le 2\max(i, r) \le 2H_n \cdot i \cdot r$ because $H_n$, $i$, and $r$ are all at least $1$.
    
	Let $g(j)$ be the number of values of $i$ such that $f(i) \ge j$; more formally, $g(j) = \sum_{i = 1}^n \id[f(i) \ge j]$.
    Now since $f(i) \le r \le n$, we have that
    $\sum_{i = 1}^n f(i) = \sum_{j = 1}^n g(j)$ as each agent $i$ is counted precisely $f(i)$ times in the RHS sum. Next, we bound $g(j)$ for each $j \in [n]$. To do this, we show that if $i$ is such that $f(i) \ge j$, then $i$ must be
    bounded by some value, $B$. This would imply that at most $B$ values of $i$ have $f(i) \ge j$, implying $g(j) \le B$. 
    \begin{align*}
        & f(i) = \bigfloor{\frac{r + i - 1}{2H_n \cdot i} }\geq j\\
        \implies & r + i - 1 \geq 2 H_n \cdot i \cdot j \tag{$2H_n \cdot i > 0$}\\
        \implies & r - 1 \geq (2H_n j - 1) \cdot i\\
        \implies & \frac{r - 1}{2H_n j - 1} \geq i. \tag{$2 H_n j - 1 > 0$ as $H_n \cdot j \geq 1$}\\
    \end{align*}
    Hence, $g(j) \le \frac{r - 1}{2H_n j - 1}$ for all $j$.
    In addition, since $H_n \cdot j \geq 1$, $H_n \cdot j \le 2H_n \cdot j - 1$,
    so $\frac{r - 1}{2H_n \cdot j - 1} \le \frac{r - 1}{H_n \cdot j} \le \frac{r}{H_n \cdot j}$.

    Plugging this into our earlier equation:
    \begin{align*}
        \sum_{i = 1}^n f(i) = \sum_{j = 1}^n g(j)
        \le \sum_{j = 1}^n \frac{r}{H_n \cdot j}= \frac{r}{H_n} \sum_{j = 1}^n \frac{1}{j}= r,
    \end{align*}
    as desired.
\end{proof}

We will now show the corresponding upper-bound for $k \ge n$, showing it is impossible to achieve $\alpha$-MMS for $\alpha > \frac{k}{H_n(m - n) - (m - k)}$. This is a generalization of a result by \citet{ABM16}. They show when rankings are complete, so $k = m$, no mechanism can guarantee $\alpha$-MMS for $\alpha > \frac{1}{H_n} + \e$ for any $\e > 0$. More specifically, in their model, such an approximation must hold for all $m$, so they in fact show it is impossible to achieve $\alpha > \frac{m}{(m - n)H_n}$ (which coincides with our bound when $m = k$), and take the limit as $m$ gets large. Our proof follows a similar structure to theirs with extra complexity due to the generalization.
	
	Suppose all agents agree on the top $k$ rankings, $g_1 \succ \ldots \succ g_k$ and do not rank the remaining $m - k$. First, we will show that if a mechanism achieves a positive MMS approximation on this allocation, the top $n$ goods $g_1, \ldots, g_n$ must be assigned such that each agent receives exactly one. Suppose this is not the case, which implies there is some agent $i$ who received none of these goods. In this case, it's possible that agent $i$ valued $g_1, \ldots, g_n$ at $\nicefrac{1}{n}$ and the remaining goods at zero, having a positive MMS, but receiving zero value. Therefore, without loss of generality, we can relabel the agents such that $g_i$ goes to agent $i$. 
	
	Fix the chosen allocation $A$.
	Let $r_i$ be the number of goods received by agent $i$ of the first $k$, $r_i = |A_i \cap \set{g_1, \ldots, g_k}|$, and let $q_i$ be the number of goods received by agent $i$ of the remaining $m - k$, $q_i = |A_i \setminus \set{g_1, \ldots, g_k}|$.
	We have that $\sum_{i = 1}^n r_i = k$ and $\sum_{i = 1}^n q_i = m - k$. It's consistent with the rankings that agent $i$ has the following valuations: $\nicefrac{1}{n}$ for goods $g_1, \ldots, g_{i - 1}$, $0$ for the $q_i$ non-top $k$ goods they received, $A_i \setminus \set{g_1, \ldots, g_k}$, and equal value $\frac{1}{m - q_i - i + 1}$ for the remaining goods. One possible partition for maximin share is to create $i - 1$ singleton bundles with goods $g_1, \ldots, g_{i - 1}$ and split the remaining $m - q_i - i + 1$ positively valued goods into $n - i + 1$ bundles as equally as possible. One of these bundles must have at most $\bigfloor{\frac{m - q_i - i + 1}{n - i + 1}}$ goods, however, agent $i$ received only $r_i$ of these items. Hence, if this allocation is $\alpha$-MMS, it must be the case that $p_i \ge \alpha \cdot \bigfloor{\frac{m - i - q_i + 1}{n - i + 1}}$. As $\sum_{i = 1}^n p_i = k$, we have that $\sum_{i = 1}^n \alpha \cdot \bigfloor{\frac{m - i - q_i + 1}{n - i + 1}} \le k$. From this, we get
	\begin{align*}
		\alpha
		&\le \frac{k}{\sum_{i=1}^n \bigfloor{\frac{m - i - q_i + 1}{n - i + 1}}}\\
		&\le \frac{k}{\sum_{i=1}^n \frac{m - i - q_i + 1}{n - i + 1} - 1}\\
		&= \frac{k}{\sum_{i=1}^n \frac{m - n - q_i}{n - i + 1}}\\
		&= \frac{k}{(m - n)\sum_{i=1}^n \frac{1}{n - i + 1} - \sum_{i=1}^n\frac{q_i}{n - i + 1}}\\
		&\le \frac{k}{(m - n) H_n - \sum_{i=1}^n\frac{q_i}{1}}\\
		&=  \frac{k}{(m - n) H_n - (m - k)}
	\end{align*}
	as we wanted to show.
    
	Next, let us consider the case of $k < n - 1$, where we will show that no allocations are $\alpha$-MMS for $\alpha > 0$.
	Consider a preference profile in which all agents agree on the top $k$ goods, that is, they rank $g_1, \ldots, g_k$, and do not rank the remaining $m - k$. Fix some arbitrary allocation $A$, we will show $A$ does not satisfy $\alpha$-MMS for any $\alpha > 0$.
	As $k < n - 1$, it must be the case that at least $n - k$ agents do not receive any of $g_1, \ldots, g_k$. Let $i$ be one of these agents that among them receives fewest goods, that is, $i \in \argmin_{j : A_j \cap \set{g_1, \ldots, g_k} = \emptyset} |A_j|$. We claim that $|A_i| \le m - n$. To see this, note that all of the at least $n - k$ agents without any of the first $k$ goods $g_1, \ldots, g_k$ have bundles of size at least $|A_i|$, so we have that $(n - k)|A_i| + k \le m$. Using the fact that $k = n - (n - k)$, after moving terms around, we get $|A_i| \le \frac{m - n}{n - k} + 1$. As $n - k > 1$ and $m - n > 0$, we get that $|A_i| < m - n + 1$, so since $|A_i|$ is an integer, it follows that $|A_i| \le m - n$. Therefore, it's consistent with the preference profile that agent $i$ has value $\nicefrac{1}{n}$ for $n$ of the goods not in $A_i$ including $g_1, \ldots, g_k$, and zero for all other goods. In this case, $\MMS_i$ is $\nicefrac{1}{n}$, but $i$ is receiving zero value from $A$, so $A$ cannot be $\alpha$-MMS for any positive $\alpha$.
	 
	Finally, let us now consider the case where $k = n - 1$. We will begin with upper-bounds, showing that no allocation can achieve higher than $1/ \bigfloor{\frac{m - n + 2}{2}}$-MMS\@. We will begin with a lemma.
	\begin{lemma}\label{lem:n-1-mms}
		When $k = n - 1$, if agents agree on the top $k$ goods, the only allocations that can satisfy $\alpha$-MMS for positive $\alpha$ are those where $n - 1$ of the agents get exactly one of $g_1, \ldots, g_k$, and the last agent receives all the remaining goods.
	\end{lemma}
	\begin{proof}
		Consider an arbitrary allocation $A$ where this is not the case. We will show $A$ cannot satisfy $\alpha$-MMS for positive $\alpha$. Since $A$ does not satisfy this, there must be some agent $i$ who receives none of $g_1, \ldots, g_k$ and also does not receive at least one good $g \in M \setminus \set{g_1, \ldots, g_k}$. It's consistent with the preference profile that $i$ has value $\nicefrac{1}{n}$ for $g_1, \ldots, g_k$ and $g$, and zero value for the remaining goods. Therefore ,$\MMS_i$ is $\nicefrac{1}{n}$, but under $A$, $i$ receive zero value, so $A$ cannot achieve a positive approximation.
	\end{proof}
	
	To show it is impossible to satisfy $\alpha$-MMS for $\alpha$ larger than $1/ \bigfloor{\frac{m - n + 2}{2}}$, suppose all agents agree on the rankings, so they report $g_1 \succ \ldots \succ g_k$ for the first $k$ goods and do not report the remaining $m - k$. Fix an arbitrary allocation $A$, we will show it cannot satisfy $\alpha$-MMS for $\alpha > 1/ \bigfloor{\frac{m - n + 2}{2}}$. By \Cref{lem:n-1-mms}, if $A$ does not give $n - 1$ agents exactly one of $g_1, \ldots, g_k$ and give the last agent all the remaining goods, it cannot have a positive MMS approximation, so suppose it does. Without loss of generality, suppose agent $i$ receives good $g_i$ for $i \in [n-1]$ and agent $n$ receives the remaining goods. In particular, consider agent $n - 1$, who received just good $g_k$. Suppose their valuations were as follows: they have value $\nicefrac{1}{n}$ for items $g_1, \ldots, g_{k-1}$ and equal value of $\frac{1}{m - n + 2}$ for all the remaining goods	. We have that the agent's MMS is at least $\bigfloor{\frac{m - n + 2}{2}} \cdot \frac{1}{m - n + 2}$ as they could create the partition with the first $n - 2$ sets being $\set{g_1}, \ldots, \set{g_{k - 1}}$ with the last two bundles created by splitting up the remaining goods as equally as possible, with sizes $\bigfloor{\frac{m - n + 2}{2}}$ and $\bigceil{\frac{m - n + 2}{2}}$. However, the agent only receives value, $\frac{1}{m - n + 2}$, so the MMS approximation is at most $1/ \bigfloor{\frac{m - n + 2}{2}}$-MMS\@.
	
	Next, we will show it is in fact possible to achieve $1/ \bigfloor{\frac{m - n + 2}{2}}$-MMS\@. The mechanism to achieve this will be the following picking sequence, the agents $1$ through $n$ followed by all $n$s, $1, 2, \ldots, n, n, \ldots, n$. We will show that every agent here receives a $1/ \bigfloor{\frac{m - n + 2}{2}}$ approximation to MMS\@.
	First, we will show the $n^{\text{th}}$ agent in fact receives an MMS allocation. Indeed, as the receives all but $n - 1$ of the goods, in any partition of the goods into $n$ bundles, at least one of the bundles must be a subset of these $m - (n - 1)$ goods. Hence, the value of them is an upper-bound on the agent's MMS\@. Next consider some agent $i \in [n - 1]$ that went $i^{\text{th}}$ in the order. Suppose their top $i$ favorite goods are $g_1, \ldots, g_i$. Note that they get at least their value for $g_i$. In any partition of the goods into $n$ bundles, at least $n - i + 1$ of them cannot contain their $g_1, \ldots, g_{i - 1}$. Further, as these $n - i + 1$ bundles collectively have at most $m - i + 1$ goods, at least one bundle must have at most $\bigfloor{\frac{m - i + 1}{n - i + 1}}$ goods, all of which are at most as valuable as $g_i$. Therefore, agent $i$ receives at least a $1/\bigfloor{\frac{m - i + 1}{n - i + 1}} \ge 1/\bigfloor{\frac{m - (n - 1) + 1}{n - (n - 1) + 1}} = 1/\bigfloor{\frac{m - n + 2}{2}}$ approximation to their MMS\@.
	\end{proof}

Strikingly, while $\Omega(n^2)$ distortion is unbeatable subject to $\alpha$-MMS for \emph{any} $\alpha > 0$, for $k \ge n$, we can achieve a matching $O(n^2)$ distortion even while simultaneously achieving the best-known MMS approximations for all $k$, introduced in \Cref{thm:mms-possible}. 

\begin{theorem}\label{thm:mms-distortion}
	The best possible distortion is as follows.
	\begin{itemize}
		\item On $\I^{k = n - 1}$, any deterministic ordinal rule satisfying $\alpha$-MMS for $\alpha > 0$ must have unbounded distortion. However, there is a randomized ordinal rule achieving the best-possible $1/ \bigfloor{\frac{m - n + 2}{2}}$-MMS and distortion $n$.
		\item On $\I^{k \ge n}$, any deterministic ordinal rule satisfying $\alpha$-MMS for $\alpha > 0$ must have distortion $\Omega(n^2)$, and there is a deterministic ordinal rule achieving $O(n^2)$ distortion with $\alpha$-MMS for the best-known $\alpha = \frac{k - n + 1}{m - n + 1} \cdot \frac{1}{2H_n}$. Further, there is a randomized ordinal rule that achieves $\alpha$-MMS for $\alpha = \frac{k - n + 1}{m - n + 1} \cdot \frac{1}{2H_n}$ with distortion $n$.
	\end{itemize}
\end{theorem}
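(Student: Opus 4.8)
The plan is to treat the two regimes $k=n-1$ and $k\ge n$ separately, deriving each lower bound from a result already in hand and each upper bound from a (possibly uniform) variant of the picking-sequence rules built in \Cref{thm:mms-possible}.

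For the lower bounds I would proceed as follows. The randomized $\Omega(n)$ bounds in both regimes are immediate from \Cref{thm:all-rules-lower}, which already rules out distortion below $n$ for \emph{any} randomized rule on $\I^{k=m}\subseteq\I^{k\ge n}$. For the deterministic $\Omega(n^2)$ bound when $k\ge n$, the key step is to check that $\alpha$-MMS falls into the class of properties covered by \Cref{thm:n2-lower}: when $m=n$ we have $\MMS_i=\min_{g}v_i(g)$, so any allocation giving each agent a single good is even $1$-MMS for every consistent valuation, while any allocation that leaves some agent empty-handed is foiled by the uniform valuation $v_i\equiv\nicefrac1n$ (which has $\MMS_i=\nicefrac1n>0$); hence the equivalence required by \Cref{thm:n2-lower} holds, and since its hard instance has $m=n=k$ it lies in $\I^{k\ge n}$, so the $\Omega(n^2)$ bound follows. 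For the unbounded distortion when $k=n-1$, I would use \Cref{lem:n-1-mms}: on the all-agree profile every $\alpha$-MMS allocation must dump the whole block of $m-(n-1)$ unranked goods on a single agent; after the deterministic rule commits to that agent, set her valuation to concentrate on the top good $g_1$ and let the others value goods uniformly, so the rule earns only $O(\nicefrac nm)$ while the optimum (hand $g_1$ to the concentrated agent, scatter the rest) is $\Omega(1)$, and letting $m\to\infty$ sends the ratio to infinity.

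For the randomized upper bounds I would use uniform variants. When $k=n-1$ the rule is the uniform variant of the sequence $1,2,\dots,n,n,\dots,n$ of \Cref{thm:mms-possible}: this is a \emph{full} length-$m$ picking sequence (the final agent sweeps every leftover), so every realization keeps the $1/\bigfloor{\frac{m-n+2}{2}}$-MMS guarantee independent of the random labeling, and \Cref{prop:uniform-picking-sequences} with $c=1$ gives distortion exactly $n$. When $k\ge n$ the rule is the uniform variant of the EDF picking sequence of \Cref{thm:mms-possible}, which is again $\alpha$-MMS in every realization; since only the first $k$ picks are true favorite-picks, I would allocate the $m-k$ unranked goods by an independent uniform lottery and bound expected welfare as the sum of the favorite-pick contribution (at least $\tfrac1n\sum_i V_i^{(k)}$, writing $V_i^{(k)}$ for agent $i$'s top-$k$ value) and the lottery contribution (returning each good's average value $\tfrac1n\sum_iv_i(g)$), arguing these total at least $1$ and hence distortion $n$.

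The deterministic $O(n^2)$ bound for $k\ge n$ is where the real difficulty lies, and this is the step I expect to be the main obstacle. The natural rule is again the EDF MMS picking sequence, so that the $\alpha$-MMS guarantee of \Cref{thm:mms-possible} comes for free, and the welfare is lower-bounded exactly as in the EF1 analysis of \Cref{thm:ef1-distortion}: the leading agent always takes her best available good, and a monotonicity/telescoping estimate over her picks shows she retains an $\Omega(\nicefrac{1}{H_n n})$ share of her total value. Charged this way the bound is $O(n^2H_n)=O(n^2\log n)$, because the MMS deadlines force the leading agent's appearances to be spaced $\approx 2H_n n$ apart so she picks only $\approx m/(2H_n n)$ times; shaving this stray logarithm down to the claimed $O(n^2)$ is the crux. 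I would attempt it by a global welfare accounting that charges value to \emph{all} agents' picks rather than to the first agent alone, using that whenever the optimum approaches its ceiling $n$ many agents must simultaneously be receiving value under the schedule --- the same phenomenon will also be what drives the randomized tail bound in the previous paragraph all the way to expected welfare $1$ for every $k\ge n$.
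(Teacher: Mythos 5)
Your lower bounds and randomized upper bounds track the paper's proof closely: the $k=n-1$ unbounded-distortion construction via \Cref{lem:n-1-mms}, the reduction of the deterministic $\Omega(n^2)$ bound to \Cref{thm:n2-lower} (where your verification that $\alpha$-MMS forces one good per agent when $m=n$ is a detail the paper leaves implicit), and the uniform variants via \Cref{prop:uniform-picking-sequences}; your explicit treatment of the leftover-good lottery for $k\ge n$ is a sensible way to push the randomized bound to exactly $n$. The genuine gap is the deterministic $O(n^2)$ upper bound for $k\ge n$, which you correctly identify as the crux and then do not prove. Keeping the earliest-deadline-first sequence of \Cref{thm:mms-possible} unchanged and analyzing the leading agent's picks gives only $O(n^2 H_n)$, as you note, and your proposed repair --- a ``global welfare accounting'' over all agents' picks --- is not carried out and is not obviously available: the adversary can concentrate essentially all of the optimal welfare on a single agent (exactly as in the hard instance of \Cref{thm:n2-lower}), so lower-bounding the \emph{sum} of utilities does not circumvent the $\log n$ loss, and there is no a priori reason why many agents must receive value simultaneously when the optimum is large.

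The paper's resolution is structural rather than analytic: it \emph{modifies the picking sequence}. A designated agent is given additional deadline pairs, one at or before position $2nj+1$ for each $j$, and all $m-k$ leftover goods are routed to her; together with her first overall pick this guarantees her utility at least $\nicefrac{1}{2n}$ of her total value, which immediately yields distortion $O(n^2)$ against the trivial welfare ceiling of $n$. The price is that one must re-verify that the enlarged set of deadlines is still schedulable by earliest-deadline-first, which is \Cref{lem:d-ineq-refinement}: $\bigfloor{\frac{d-1}{2n}} + \sum_{i=1}^n\bigfloor{\frac{d-i}{2H_n(n-i+1)}} \le d-n$ for all $d\ge n+1$. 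This strengthening of \Cref{lem:d-ineq} is the genuinely new technical content of the theorem (it requires a separate case analysis for small $n$ and an induction establishing $H_{3n}\le 2H_n-1$ for $n\ge 4$), and nothing in your proposal substitutes for it. Without some mechanism that hands one agent an $\Omega(\nicefrac{1}{n})$ fraction of her total value while provably preserving every MMS deadline, the $O(n^2)$ claim remains open in your write-up.
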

\begin{proof}
	We will begin with the lower bound for distortion of deterministic rules on instances with $k = n - 1$. Fix some $n \ge 2$ and let $k = n - 1$. We will construct a family of instances with $n$ agents, top-$k$ rankings, increasing number of goods whose distortion grows linearly with the number of goods. This implies the distortion is unbounded. Suppose the agents agree on the ranking of the top $k$ goods $g_1 \succ \ldots \succ g_k$ and leave the remaining $m - k$ goods unranked. 
	Consider some allocation $A$ chosen by an $\alpha$-MMS rule for $\alpha > 0$. By \Cref{lem:n-1-mms}, under $A$, $n - 1$ agents must receive exactly one of $g_1, \ldots, g_k$. Without loss of generality, suppose agent $1$ receives $g_1$. Suppose the valuations are such that agent $1$ has value $\nicefrac{1}{m}$ for all the goods, while all other agents have value $1$ for $g_1$ and $0$ for all others. This means the social welfare is $\nicefrac{1}{m}$ while the highest possible social welfare is $1$, so the distortion is at least $m$. Further, this ratio holds even compared to the best $\alpha$-MMS allocation, as swapping $A_1$ and $A_2$ must remaining $\alpha$-MMS as all preference rankings are identical, while the social welfare is at least $1$.
	
	For randomized rules, however, note that $1/ \bigfloor{\frac{m - n + 2}{2}}$-MMS can be achieved by a picking sequence as shown in \Cref{thm:mms-possible}. By \Cref{prop:uniform-picking-sequences}, the uniform version of this picking sequence is also $1/ \bigfloor{\frac{m - n + 2}{2}}$-MMS and has distortion $n$.
	
	Next, we will consider the case where $k \ge n$. First, deterministic $\alpha$-MMS rules for $\alpha > 0$ must have distortion $\Omega(n^2)$ is an easy consequence of \Cref{thm:n2-lower}.
	
	We will now show it's possible to achieve $O(n^2)$ distortion even with an $\frac{k - n + 1}{m - n + 1} \cdot \frac{1}{2H_n}$-MMS deterministic rule. The construction is very similar to $k \ge n$ construction of \Cref{thm:mms-possible}, except that we add one more constraint: we want agent $1$ to also have additional appearances in the picking sequence, once at or before position $2 n j+1$ for each $j$. Further, if we are using top-$k$ rankings, we will give all the goods after the $k$ steps of the picking sequence to the agent $1$. Because agent $1$ gets the first overall pick as well, this ensures that the utility to agent $1$ is at least $\nicefrac{1}{2n}$ fraction of her value for all the goods, i.e., at least $\nicefrac{1}{2n}$. 
    Since $n$ is a trivial upper bound on the social welfare of any allocation, this implies the distortion upper bound of $O(n^2)$. 

    Now, as in the proof of \Cref{thm:mms-possible}, we must show that for all $d$, there are at most
    $d$ pairs with second coordinate at most $d$, also counting the additional pairs generated due to the extra constraint above. Since we have not added any pairs with second coordinate at most $n$, the part of the proof for $d \le n$ still holds. For $d \ge n + 1$, the number of new pairs added due to the above constraint is $\bigfloor{\frac{d - 1}{2n}}$. Hence, we need the following strengthening of \Cref{lem:d-ineq}.
    \begin{lemma}
        \label{lem:d-ineq-refinement}
        For all $n \in \bbN$ and for all $d \ge n + 1$,
        $\bigfloor{\frac{d - 1}{2n}} + \sum_{i = 1}^n \bigfloor{\frac{d - i}{2H_n (n - i + 1)}} \le d - n$.
    \end{lemma}
    \begin{proof}
		First note that when $n = 1$, this is equivalent to showing
		$\bigfloor{\frac{d - 1}{2}} + \bigfloor{\frac{d - 1}{2}} \le d - 1$ which trivially holds for all $d \in \bbN$. Therefore, from now on, we restrict to the case of $n \ge 2$.
		
		Similarly to \Cref{lem:d-ineq}, we consider two cases: $d \ge 4n$ and $d < 4n$. First, suppose $d \ge 4n$. Then,
		\begin{align*}
		\bigfloor{\frac{d - 1}{2n}} + \sum_{i=1}^n \bigfloor{\frac{d - i}{2H_n (n - i + 1)}}
		&\le \frac{d-1}{2n} + \sum_{i=1}^n \frac{d - i}{2H_n (n - i + 1)}\\
		&\le \frac{d}{4} + \sum_{i=1}^n \frac{d}{2H_n (n - i + 1)}\\
		&= \frac{d}{4} + \frac{d}{2H_n} \cdot \sum_{i=1}^n \frac{1}{n - i + 1}\\
		&= \frac{d}{4} + \frac{d}{2H_n} \cdot H_n\\
		&= \frac{3d}{4} \le d - n,
		\end{align*}
		where the second transition holds because $n \ge 2$, and the final transition holds because $d \ge 4n$.
		
		Next, suppose $d < 4n$. First, we tackle the special case of $n=2$. Note that when $d \le 2n$, we have $\bigfloor{\frac{d - 1}{2n}} = 0$, in which case the desired inequality is already implied by \Cref{lem:d-ineq}. For $2n < d < 4n$, i.e., $d \in \set{5,6,7}$, we have that $\bigfloor{\frac{d - 1}{2n}} = 1$. Hence, the desired inequality holds as follows:
		\begin{align*}
		d = 5: \quad & 1 + \sum_{i = 1}^2 \bigfloor{\frac{5 - i}{2H_2 (2 - i + 1)}} = 2 \le 5 - 1\\
		d = 6: \quad & 1 + \sum_{i = 1}^2 \bigfloor{\frac{6 - i}{2H_2 (2 - i + 1)}} = 2 \le 6 - 1\\
		d = 7: \quad & 1 + \sum_{i = 1}^2 \bigfloor{\frac{7 - i}{2H_2 (2 - i + 1)}} = 3 \le 7 - 1
		\end{align*}

		From now on, we restrict to the case of $n \ge 3$. Because we are in the case of $d < 4n$, this implies $\bigfloor{\frac{d - 1}{2n}} \le 1$. Hence, it is sufficient to show that 
		\[
		\sum_{i = 1}^n \bigfloor{\frac{d - i}{2H_n (n - i + 1)}} \le d - n - 1.
		\]
		As in \Cref{lem:d-ineq}, letting $r = d - n$, the above statement is equivalent to
		\[
		\sum_{i=1}^n \bigfloor{\frac{r + i - 1}{2H_n \cdot i}} \le r - 1.
		\]
		
		As in \Cref{lem:d-ineq}, define $f(i) = \bigfloor{\frac{r + i - 1}{2H_n \cdot i}}$ and $g(j) = \sum_{i = 1}^n \id[f(i) \ge j]$. In \Cref{lem:d-ineq}, we had $d < 2n$, giving us $f(i) \le r \le n$ for all $i \in [n]$. In this case, we still have that $f(i) \le r$ for all $i \in [n]$, but since we only have $d < 4n$, we have $r \le 3n$. We note that $g(j) \le \frac{r - 1}{2H_n j - 1}$ can still be derived as in the proof of \Cref{lem:d-ineq}. Thus, we have that
		\begin{align*}
		\sum_{i = 1}^n f(i) = \sum_{j = 1}^{3n} g(j) \le \sum_{j = 1}^{3n} \frac{r - 1}{2H_n j - 1} = (r - 1) \cdot \sum_{j = 1}^{3n} \frac{1}{2H_n j - 1}.
		\end{align*}
		
		To complete the proof, we simply need to show that $\sum_{j = 1}^{3n} \frac{1}{2H_n j - 1} \le 1$ for all $n \ge 3$. When $n = 3$, we have
		\begin{align*}
		\sum_{j = 1}^{3n} \frac{1}{2H_n j - 1} = \sum_{j = 1}^{9} \frac{1}{2H_3 j - 1} = \frac{19{,}657{,}653{,}727}{21{,}402{,}806{,}880} \le 1.
		\end{align*}
		
		Suppose $n \ge 4$. First, note that
		\begin{align*}
		\sum_{j = 1}^{3n} \frac{1}{2H_n j - 1} \le \sum_{j = 1}^{3n} \frac{1}{(2H_n - 1)j} = \frac{1}{2H_n - 1} \sum_{j = 1}^{3n} \frac{1}{j} = \frac{H_{3n}}{2H_n - 1},
		\end{align*}
		where the first transition holds because $j \ge 1$. To show that this final quantity is at most $1$, we need to show $H_{3n} \le 2H_n - 1$ for all $n \ge 4$. We prove this by induction on $n$. 
		
		For the base case of $n = 4$, we have that $H_{12} = \frac{86{,}021}{27{,}720} \le \frac{19}{6} = 2H_4 - 1$. Suppose $H_{3n} \le 2H_n - 1$ holds for some $n \ge 4$. We need to show that $H_{3(n+1)} \le 2H_{n+1} - 1$. We have that
		\begin{align*}
		H_{3(n+1)}
		&= H_{3n} + \frac{1}{3n + 1} + \frac{1}{3n + 2} + \frac{1}{3n + 3}\\
		&\le H_{3n} + \frac{3}{3n}\\
		&= 2H_n - 1 + \frac{1}{n}\\
		&\le 2H_n - 1 + \frac{2}{n + 1}\\
		&\le 2H_{n+1} - 1\qedhere
		\end{align*}
	\end{proof}
	Given \Cref{lem:d-ineq-refinement}, the result for deterministic rules follows.
	
	Finally, for randomized rules, note that the original rule of \cref{thm:mms-distortion} was a picking sequence, thus the uniform version has distortion $n$ by \Cref{prop:uniform-picking-sequences}.
\end{proof}

\section{Discussion}\label{sec:disc}

In this paper, we analyze which fairness properties can be achieved and what loss in social welfare must be incurred (distortion) when only ordinal preference information is provided in the form of top-$k$ rankings. 

This is inspired by a growing literature on distortion in voting~\cite{PR06}. A recent line of work has focused on imposing additional structure on the underlying cardinal preferences~\cite{ABEP+18,GHS20}. In fair division, one can also study natural restrictions on the underlying cardinal preferences such as a limit on the number of goods an agent can derive positive utility from or on the maximum difference between the values two agents can derive from the same good. 

Another thread of research on distortion in voting has focused on the tradeoff between distortion and the amount of preference information elicited~\cite{MPSW19,MPW20,Kem20,ABFV20}. Our results already offer one such tradeoff by allowing the designer to pick the value of $k$. An interesting direction for the future would be to study such a tradeoff while allowing arbitrary --- not necessarily ordinal --- elicitation and measuring the number of bits of information elicited. 

\section*{Acknowledgements}
Nisarg Shah was partially funded by an NSERC Discovery Grant.

\bibliographystyle{named}
\bibliography{abb,ultimate}

\appendix
\section*{Appendix}

\section{Impossible Fairness Properties}\label{sec:impossible-fairness}
Finally, we show that some fairness properties studied in the literature are impossible to guarantee given only the ordinal information, even with complete rankings. Consider an instance $I = (N, M, k, \sigma)$ with valuations $\vals$.

\begin{definition}\label{def:efx}
An allocation $\vec{A}$ is said to be \emph{envy-free up to any good (EFX)} if for every pair of agents
$i$, $j$ and every good $g \in A_j$, we have $v_i(A_i) \ge v_i(A_j \setminus \set{g})$.
\end{definition}

\begin{definition}\label{def:eq1}
An allocation $\vec{A}$ is said to be \emph{equitable up to one good (EQ1)} if for every pair of agents
$i$, $j$ such that $A_j \ne \emptyset$, there exists some good $g \in A_j$ such that $v_i(A_i) \ge v_j(A_j \setminus \set{g})$.
$\vec{A}$ is said to be \emph{equitable up to any good (EQX)} if for every pair of agents $i$, $j$ such that $A_j \ne \emptyset$
and for every good $g \in A_j$ such that $v_j(g) > 0$, we have that $v_i(A_i) \ge v_j(A_j \setminus \set{j})$.
\end{definition}

\begin{proposition}
    \label{prop:efx-impossible}
    There does not exist an ordinal allocation rule satisfying EFX even for two agents with access to complete rankings.
\end{proposition}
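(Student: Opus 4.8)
The plan is to exhibit a single preference profile $\sigma$ over two agents such that, no matter how a rule partitions the goods, there exist consistent valuations under which EFX is violated. Since an ordinal rule sees only $\sigma$ and must fix an allocation that is EFX for \emph{all} consistent valuations, it suffices to find one profile where every possible allocation admits some consistent valuation breaking EFX.

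First I would choose a small instance, say three goods $g_1, g_2, g_3$ with both agents sharing the ranking $g_1 \succ g_2 \succ g_3$ (complete rankings, $k = m = 3$). A rule must output one of the partitions of these three goods into two bundles. By symmetry between the agents, I would enumerate the essentially distinct ways the three goods can be split across the two bundles: either one agent gets all three and the other gets nothing, or one agent gets two goods and the other gets one. For each such partition, the key step is to construct a consistent valuation profile that forces an EFX violation. The natural tool is to make one agent value her own small bundle at nearly nothing while valuing the other agent's bundle highly, with the valuation still respecting the common ranking $g_1 \succ g_2 \succ g_3$.

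The heart of the argument is the case where some agent, say agent $i$, receives a bundle $A_j$ of size at least two from the other agent's perspective (or receives the empty bundle). When $A_i = \emptyset$ and $A_j$ has all three goods, agent $i$ envies even after removing the least-valued good $g_3$ from $A_j$, provided $v_i(A_j \setminus \{g_3\}) = v_i(\{g_1,g_2\}) > 0$, which holds under any consistent valuation with $v_i(g_1) > 0$. When the split is two-one, say agent $i$ gets the single good while agent $j$ gets two goods, I would assign agent $i$ a valuation concentrating nearly all her mass on the two goods in $A_j$: because she sees two goods in $A_j$, removing \emph{any one} of them still leaves a good she values positively, so $v_i(A_i) < v_i(A_j \setminus \{g\})$ for the appropriate $g$. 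The consistency constraint is mild here because the two goods in $A_j$ can be arranged to be her top two ranked goods, making the adversarial valuation consistent with $\sigma$.

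The main obstacle I anticipate is ensuring the adversarial valuations are genuinely \emph{consistent} with the single fixed ranking $\sigma$ in every case simultaneously, since the rule gets to pick the allocation after seeing $\sigma$. The clean way around this is the observation, used throughout the paper, that when both agents share the same ranking the rule's choice of \emph{which} agent gets the favorable bundle is irrelevant by symmetry, so I only need to defeat the \emph{shape} of the partition, not its labeling. Thus I would verify that for each unlabeled partition shape of $\{g_1,g_2,g_3\}$ into two bundles, one agent necessarily holds a bundle of size at least two (since $3$ goods cannot be split so that both bundles have size at most one), and that agent's counterpart can always be handed a valuation making her envy that bundle even after one good is removed. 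This size-counting observation — that with an odd or otherwise unbalanced number of goods some bundle must contain at least two goods, which is exactly what EFX forbids against a suitably adversarial valuation — is the crux, and everything else is routine verification of consistency and the EFX inequality.
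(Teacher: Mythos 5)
There is a genuine gap: your three-good instance does not work, because the rule can defeat it. With both agents ranking $g_1 \succ g_2 \succ g_3$, consider the allocation $A_1 = \set{g_1}$, $A_2 = \set{g_2, g_3}$. For agent $1$, consistency forces $v_1(g_1) \ge v_1(g_2)$ and $v_1(g_1) \ge v_1(g_3)$, so removing either good from $A_2$ leaves a singleton worth at most $v_1(A_1)$; for agent $2$, the EFX condition against a singleton bundle is vacuous (removing the one good leaves the empty set). Hence this allocation is EFX for \emph{every} consistent valuation, and your profile fails to defeat the rule. The flaw in your reasoning is the sentence claiming that ``the two goods in $A_j$ can be arranged to be her top two ranked goods'': the ranking is fixed by the profile you announce, and the rule then chooses the allocation adversarially to you, so it will simply place $g_1$ in the singleton bundle. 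Indeed, for two agents and three goods, giving one agent her top good and the other agent the remaining two is always EFX given only ordinal information, so no three-good profile can prove the proposition.

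The paper avoids this by using \emph{four} goods with the common ranking $g_1 \succ g_2 \succ g_3 \succ g_4$. If the bundle sizes differ, they differ by at least two, and a uniform valuation for the smaller-bundle agent violates EFX (this part of your argument is fine and carries over). In the balanced case each agent gets exactly two goods, so some agent, say agent $2$, does not receive $g_1$; giving her value concentrated almost entirely on $g_1$ (which is consistent, since $g_1$ is top-ranked) means that after removing the \emph{other} good from $A_1$ she still envies the remaining $\set{g_1}$. The essential idea you are missing is that the adversarial valuation must exploit a high-ranked good sitting in the \emph{other} agent's bundle alongside a second good, and with three goods the rule can always prevent that configuration.
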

\begin{proof}
	Suppose there are two agents and four goods (denoted $g_1,\ldots,g_4$ for notational convenience). Consider the preference profile wherein both agents have the same preference ranking over the goods, given by $g_1 \succ g_2 \succ g_3 \succ g_4$.
	
	We show that regardless of the allocation chosen by a deterministic ordinal allocation rule, EFX cannot be satisfied with respect to every valuation profile consistent with this preference profile. 
	
	Suppose the rule picks allocation $\vec{A} = (A_1, A_2)$. There are two cases: $|A_1| \ne |A_2|$ and $|A_1| = |A_2|$.
	
	First, suppose $|A_1| \ne |A_2|$, and without loss of generality, suppose $|A_1| > |A_2|$. Since there are four goods, this implies that $|A_1| \ge |A_2| + 2$. In this case, if agent $2$ has equal value of $\nicefrac{1}{4}$ for all four goods, then agent $2$ would envy agent $1$ even after removing any single good from $A_1$, violating EFX (and in fact, EF1 too).
	
	Next, suppose that $|A_1| = |A_2|$, which implies that each is equal to $2$. Without loss of generality, suppose $g_1 \in A_1$. Then, consider the valuation function of agent $2$ which places value $1$ on good $g_1$ and $\epsilon < 1/3$ on every other good. Then, agent $2$ would envies agent $1$ even after removing the single good in $A_1 \setminus \set{g_1}$, violating EFX. 
\end{proof}

\begin{proposition}
    \label{prop:eq1-impossilbe}
    There does not exist an ordinal allocation rule satisfying EQ1 or EQX even for two agents with access to complete rankings.
\end{proposition}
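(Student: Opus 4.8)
The plan is to prove impossibility of EQ1 and EQX by exhibiting a single worst-case preference profile and showing that no deterministic allocation can be equitable-up-to-one-good for every consistent valuation profile, mirroring the structure of the EFX proof in \Cref{prop:efx-impossible}. I would again take two agents and a small number of goods with identical rankings, say $g_1 \succ g_2 \succ \dots$, so that the rule has no information to distinguish the agents. Since equitability compares $v_i(A_i)$ against the \emph{other} agent's value $v_j(A_j \setminus \set{g})$, the key leverage is that the adversary gets to choose both agents' valuations after seeing the allocation, and identical rankings mean any valuation consistent for one agent is consistent for the other.

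The main steps, in order, are as follows. First, fix the identical-ranking profile and let $\vec{A} = (A_1,A_2)$ be the allocation returned by an arbitrary deterministic rule. Second, split into cases according to the sizes $|A_1|$ and $|A_2|$, exactly as in the EFX argument. In the balanced case (each agent gets the same number of goods), I would concentrate nearly all of one agent's value on a single good that the \emph{other} agent holds: for instance, if $g_1 \in A_1$, let agent $2$ value $g_1$ at nearly $1$ and everything else negligibly. Then $v_2(A_2)$ is tiny while $v_1(A_1) \geq v_1(g_1)$ is forced to be large under any consistent valuation for agent $1$ (since $g_1$ is ranked first), so the equitability comparison $v_1(A_1) \ge v_2(A_2 \setminus \set{g})$ or its reverse fails in one direction even after deleting a single good from the smaller bundle. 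In the unbalanced case I would use a uniform valuation to force the size gap to translate directly into an unbridgeable value gap. Third, I would observe that EQX is a strengthening of EQ1 (removing an arbitrary positively-valued good is at least as demanding as removing a well-chosen one), so the same instance refutes EQX automatically; it suffices to check the EQ1 failure and note the deletions used are forced.

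The hard part will be handling the bidirectional nature of equitability cleanly: unlike envy, the EQ1 condition in \Cref{def:eq1} involves comparing agent $i$'s realized value to agent $j$'s value on $j$'s own bundle, so I must ensure the chosen valuations make the relevant inequality fail for the correct ordered pair $(i,j)$ rather than accidentally satisfying it in the reverse direction. Concretely, the obstacle is verifying that the good whose deletion is allowed cannot rescue the inequality --- this requires checking that even the best single-good deletion from the offending bundle leaves a value exceeding the other agent's entire bundle. I expect the uniform and near-concentrated valuations above to make these checks routine, but the case analysis must be exhaustive over which goods land in which bundle. I would therefore present the profile, dispatch the size-mismatch case with a uniform valuation, then dispatch the balanced case by placing mass on a good held by one agent, and finally remark that because the deletions invoked are the only admissible ones, the construction defeats EQX as well.
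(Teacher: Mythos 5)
Your proposal matches the paper's proof essentially step for step: the same identical-ranking profile on a small good set (the paper uses four goods), the same case split on $|A_1|$ versus $|A_2|$, a uniform valuation in the unbalanced case, and in the balanced case a valuation for agent $2$ concentrated on a good $g_1 \in A_1$ while the paper makes agent $1$'s valuation uniform so that $v_1(A_1 \setminus \set{g}) = \nicefrac{1}{4} > v_2(A_2) = 0$ for \emph{every} $g \in A_1$, which is exactly the "no deletion rescues the inequality" check you flag. The only detail to pin down is that choice of agent $1$'s valuation in the balanced case (uniform works); with it, the argument is correct and the EQ1-implies-EQX remark goes through as in the paper.
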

\begin{proof}
	Consider the construction used in the proof of \Cref{prop:efx-impossible}, in which two agents have the same preference ranking over four goods given by $g_1 \succ g_2 \succ g_3 \succ g_4$.
	
	We show that regardless of the allocation chosen by a deterministic ordinal allocation rule, EQ1 cannot be satisfied with respect to all valuation profiles consistent with this preference profile. This implies the desired result for EQX as well. 
	
	Suppose the rule picks allocation $\vec{A} = (A_1, A_2)$.
	There are two cases: $|A_1| \ne |A_2|$ and $|A_1| = |A_2|$.
	
	First, suppose $|A_1| \ne |A_2|$, and without loss of generality suppose $|A_1| > |A_2|$. Since there are four goods, this implies that $|A_1| \ge |A_2| + 2$. Now, consider the valuation profile in which both agents have an equal value of $\nicefrac{1}{4}$ for all the goods. Then, even after removing any single good from $A_1$, the utility to agent $1$ will be more than the utility to agent $2$, violating EQ1. 
	
	Next, suppose $|A_1| = |A_2|$, which implies that each is equal to $2$.Without loss of generality, suppose $g_1 \in A_1$. Consider the valuation profile in which agent $2$ has value $1$ for $g_1$ and $0$ for every other good, while agent $1$ has value $\nicefrac{1}{4}$ for all the goods. Thus, the utility to agent $2$ is $0$, while the utility to agent $1$ remains higher than that, even after removing any single good from $A_1$, violating EQ1. 
\end{proof}

\end{document}